\newcommand{\NN}{\mathbb{N}}
\newcommand{\ZZ}{\mathbb{Z}}
\newcommand{\RR}{\mathbb{R}}
\renewcommand{\d}{\mathrm{d}}
\newcommand{\T}{\mathrm{T}}
\newcommand{\N}{\mathcal{N}}
\newcommand{\NULL}{0}
\newcommand{\SET}[1]{\left\{#1\right\}}
\newcommand{\Ex}[2][]{\mbox{\rm\bf E}_{#1}\hspace{-0.03cm}\left[#2\right]}
\renewcommand{\Pr}[2][]{\mbox{\rm\bf Pr}_{#1}\hspace{-0.03cm}\left[#2\right]}
\newcommand{\A}{\mathbb{A}}
\newcommand{\B}{\mathbb{B}}
\newcommand{\E}{\mathbb{E}}
\newcommand{\F}{\mathbb{F}}
\newcommand{\DOT}{\,.}
\newcommand{\COMMA}{\,,}
\newcommand{\WHERE}{\,\colon\,}
\newcommand{\DEF}{\mathop{:=}}
\newcommand{\FED}{\mathop{=:}}
\newcommand{\diag}{\mathrm{diag}}
\newcommand{\SPAN}[1]{\mathrm{span}\{#1\}}
\newcommand{\e}{\varepsilon}
\newcommand{\ID}[1]{\mathbb{I}_{#1}}
\newcommand{\ZERO}[2]{\mathbb{O}_{#1\times#2}}
\providecommand{\qedhere}{\tag*{\qed}}
\newtheorem{theorem}{Theorem}
\newtheorem{lemma}[theorem]{Lemma}
\newtheorem{definition}[theorem]{Definition}
\newtheorem{corollary}[theorem]{Corollary}
\providecommand{\institute}[1]{\date{#1}}
\providecommand{\email}[1]{\texttt{#1}}
\begin{document}

\title{Finding Short Paths on Polytopes by the\\Shadow Vertex Algorithm\thanks{This research was supported by ERC Starting Grant 306465 (BeyondWorstCase).}}

\author{Tobias Brunsch \and Heiko R\"oglin}

\institute{
Department of Computer Science\\
University of Bonn,
Germany\\
\email{\small\{brunsch,roeglin\}@cs.uni-bonn.de}}

\maketitle

\begin{abstract}
We show that the shadow vertex algorithm can be used to 
compute a short path between a given pair of vertices of a polytope $P = \SET{ x \in \RR^n \WHERE Ax \leq b }$ along the edges of~$P$, where $A \in \RR^{m \times n}$. Both, the length of the path and the running time of the algorithm, are polynomial in~$m$, $n$, and a parameter~$1/\delta$ that is a measure for the flatness of the vertices of~$P$. For integer matrices $A \in \ZZ^{m \times n}$ we show a connection between~$\delta$ and the largest absolute value~$\Delta$ of any sub-determinant of~$A$, yielding a bound of $O(\Delta^4 m n^4)$ for the length of the computed path. This bound is expressed in the same parameter~$\Delta$ as the recent non-constructive bound of $O(\Delta^2 n^4 \log (n \Delta))$ by Bonifas et al.~\cite{BonifasDEHN12}.

For the special case of totally unimodular matrices, the length of the computed path simplifies to $O(m n^4)$, which significantly improves the previously best known constructive bound of $O(m^{16} n^3 \log^3 (mn))$ by Dyer and Frieze~\cite{DyerF94}.
\end{abstract}

\section{Introduction}

We consider the following problem: Given a matrix $A = [a_1, \ldots, a_m]^\T \in \RR^{m \times n}$, a vector $b \in \RR^m$, and two vertices~$x_1$ and~$x_2$ of the polytope $P \!=\! \SET{ x \in \RR^n \WHERE \! Ax \leq b }$, find a short path from~$x_1$ to~$x_2$ along the edges of~$P$ efficiently. In this context \emph{efficient} means that the running time of the algorithm is polynomially bounded in~$m$, $n$, and the length of the path it computes. Note, that the polytope~$P$ does not have to be bounded.

The \emph{diameter~$d(P)$} of the polytope~$P$ is the smallest integer~$d$ that bounds the length of the shortest path between any two vertices of~$P$ from above. The \emph{polynomial Hirsch conjecture} states that the diameter of~$P$ is polynomially bounded in~$m$ and~$n$ for any matrix~$A$ and any vector~$b$. As long as this conjecture remains unresolved, it is unclear whether there always exists a path of polynomial length between the given vertices~$x_1$ and~$x_2$. Moreover, even if such a path exists, it is open whether there is an efficient algorithm to find it.

\paragraph{Related work} The diameter of polytopes has been studied extensively in the last decades. In 1957 Hirsch conjectured that the
diameter of~$P$ is bounded by $m-n$  for any matrix~$A$ and any vector~$b$ (see Dantzig's seminal book about linear programming~\cite{Dantzig63}). This conjecture has been disproven by Klee and Walkup~\cite{KleeW67} who gave an unbounded counterexample. However, it remained open for quite a long time whether the conjecture holds for bounded polytopes. More than fourty years later
Santos~\cite{Santos10} gave the first counterexample to this refined conjecture showing that there are bounded polytopes~$P$ for which $d(P) \geq (1+\e)
\cdot m$ for some $\e > 0$. This is the best known lower bound today. On the other hand, the best known upper bound of $O(m^{1+\log n})$ due
to Kalai and Kleitman~\cite{KalaiK92} is only quasi-polynomial. It is still an open question whether $d(P)$ is always polynomially bounded 
in~$m$ and~$n$. This has only been shown for special classes of polytopes like $0/1$ polytopes, flow-polytopes, and  the transportation
polytope. For these classes of polytopes bounds of~$m-n$ (Naddef~\cite{Naddef89}), $O(mn \log n)$ (Orlin~\cite{Orlin97}), and $O(m)$ (Brightwell et al.~\cite{BrightwellHS06}) have been shown, respectively. On the other hand, there are bounds on the diameter of far more general classes of polytopes that
depend polynomially on~$m$, $n$, and on additional parameters. Recently, Bonifas et al.~\cite{BonifasDEHN12} showed that the diameter
of polytopes~$P$ defined by integer matrices~$A$ is bounded by a polynomial in~$n$ and a parameter that depends on the matrix~$A$. They showed that $d(P) = O(\Delta^2 n^4 \log (n\Delta))$, where~$\Delta$ is the largest absolute value of any
sub-determinant of~$A$. Although the parameter~$\Delta$ can be very large in general, this approach allows to obtain bounds for classes of polytopes for which~$\Delta$ is known to be small. For example, if the matrix~$A$ is totally unimodular, i.e., if all sub-determinants
of~$A$ are from $\SET{ -1, 0, 1 }$, then their bound simplifies to $O(n^4 \log n)$, improving the previously best known bound of $O(m^{16}
n^3 \log^3 (mn))$ by Dyer and Frieze~\cite{DyerF94}.

We are not only interested in the existence of a short path between two vertices of a polytope but we want to compute such a path efficiently. It is clear that lower bounds for the diameter of polytopes have direct (negative) consequences for this algorithmic problem. However, upper bounds
for the diameter do not necessarily have algorithmic consequences as they might be non-constructive. 
The aforementioned bounds of Orlin, Brightwell et al., and Dyer and Frieze are constructive, whereas the bound of Bonifas et al.\ is not.

\paragraph{Our contribution} We give a constructive upper bound for the diameter of the polytope $P = \SET{ x \in \RR^n \WHERE Ax \leq b }$ for arbitrary matrices $A \in \RR^{m \times n}$ and arbitrary vectors $b \in \RR^m$.\footnote{Note that we do not require the polytope to be bounded.} This bound is polynomial in~$m$, $n$, and a parameter~$1/\delta$, which depends only on the matrix~$A$ and is a measure for the angle between edges of the polytope~$P$ and their neighboring facets. 
We say that a facet~$F$ of the polytope~$P$ is neighboring an edge~$e$ if exactly one of the endpoints of~$e$ belongs to~$F$.
The parameter~$\delta$ denotes the smallest sine of any angle between an edge and a neighboring facet in~$P$.
If, for example, every edge is orthogonal to its neighboring facets, then~$\delta=1$. On the other hand, if there exists an edge that
is almost parallel to a neighboring facet, then~$\delta\approx 0$. 
The formal definition of~$\delta$ is deferred to Section~\ref{sec:Parameter}.

A well-known pivot rule for the simplex algorithm is the shadow vertex rule, which gained attention in recent years because
it has been shown to have polynomial running time in the framework of smoothed analysis~\cite{SpielmanTeng}. We will present
a randomized variant of this pivot rule that computes a path between two given vertices of the polytope~$P$.
We will introduce this variant in Section~\ref{sec:Algo} and we call it \emph{shadow vertex algorithm} in the following.

\begin{theorem}
\label{maintheorem}
Given vertices~$x_1$ and~$x_2$ of~$P$, the shadow vertex algorithm efficiently computes a path from~$x_1$ to~$x_2$ on
the polytope~$P$ with expected length~$O \big(\! \frac{mn^2}{\delta^2} \!\big)$.
\end{theorem}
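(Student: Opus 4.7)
The plan is to reduce the expected path length to the expected number of vertices of a two-dimensional shadow of $P$, and then to bound that shadow size using the flatness parameter $\delta$.

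First, I would recall how the shadow vertex pivot rule works: a plane $W \subseteq \RR^n$ is fixed, the polytope $P$ is projected onto $W$ to obtain a (generally unbounded) convex polygon $\pi_W(P)$, and the algorithm pivots between vertices of $P$ whose projections lie on the boundary of $\pi_W(P)$, following this boundary between the images of $x_1$ and $x_2$. The number of pivot steps is therefore at most the number of vertices of $\pi_W(P)$ that lie on the traversed arc. So the first reduction is: if $x_1$ and $x_2$ can be arranged to be shadow vertices of a randomly chosen plane $W$, then it suffices to bound the expected number of vertices of $\pi_W(P)$.

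Second, because it is not generally possible to choose a single deterministic linear objective for which both $x_1$ and $x_2$ are optima, I would use a two-phase variant: run the shadow vertex algorithm twice, once from $x_1$ and once from $x_2$, each toward a common auxiliary direction determined by a random vector $u$, and concatenate the two resulting paths (using the randomness of $u$ to ensure that the chosen planes are generic and that both vertices lie on the relevant shadow boundaries). This reduces the task to bounding, for each of the two random planes, the expected number of its shadow vertices.

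Third, and this is the heart of the argument, I would bound $\E[\,|\mathrm{vert}(\pi_W(P))|\,]$ in the spirit of smoothed analysis. For each edge $e$ of $P$, one has to estimate the probability that the projection $\pi_W(e)$ appears on the boundary of $\pi_W(P)$. The key geometric input here is the parameter $\delta$: an edge whose direction forms a non-trivial angle (sine at least $\delta$) with every neighboring facet is, under a random rotation, unlikely to project to a boundary edge of the shadow, and one can quantify the probability as $O(1/\delta^2)$ using an elementary angular measure estimate on the random plane. Summing this probability bound over the $O(mn)$ edges of $P$ (each edge arises from $n-1$ tight constraints adjacent to one additional active constraint, giving this count under non-degeneracy) then yields $\E[\,|\mathrm{vert}(\pi_W(P))|\,] = O(mn/\delta^2)$, and an extra factor of $n$ arising from the dimension in the random-rotation argument produces the final bound $O(mn^2/\delta^2)$.

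The main obstacle is the probabilistic geometric estimate in the third step: showing that a random two-dimensional plane sees each edge of $P$ on the shadow boundary with probability $O(1/\delta^2)$. This requires relating the angle between the edge and its neighboring facets, measured through $\delta$, to the measure of directions on the sphere whose projection places the edge extremally. All remaining steps — the reduction to shadow size, the two-phase concatenation, and the edge count — are comparatively routine once the distribution of $W$ used by the algorithm in Section~\ref{sec:Algo} is in hand.
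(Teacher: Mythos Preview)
Your proposal has a genuine gap that breaks the argument: the claim that $P$ has $O(mn)$ edges is false. A non-degenerate polytope $\{x : Ax \le b\}$ with $m$ constraints in $\RR^n$ can have up to $\binom{m}{n-1}$ edges (each edge is determined by a choice of $n-1$ tight constraints), and this is exponential in $n$. So even if you could show that each fixed edge lies on the shadow boundary with probability $O(1/\delta^2)$, summing over all edges would only give an exponential bound. The justification you offer---``each edge arises from $n-1$ tight constraints adjacent to one additional active constraint''---does not yield $O(mn)$; it describes exactly the $\binom{m}{n-1}$ count.

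The paper circumvents this by never summing over edges of~$P$. Instead it analyzes the sequence of \emph{slopes} of the edges along the path $R$ from $\pi(x_1)$ to $\pi(x_2)$ (these slopes are positive and strictly decreasing with probability one), partitions $(0,1]$ into subintervals of width $\e$, and bounds the probability that \emph{some} slope falls in a given subinterval $(t,t+\e]$ by $\frac{4mn^2\e}{\delta^2}$. The union bound here is only over the $m$ possible indices $i\in[m]$ of the constraint that becomes tight at the far endpoint of the relevant edge---not over all edges. The key step is a principle-of-deferred-decisions argument: one reveals $w_1$ completely and reveals $w_2$ only up to its component along $a_i$; this already determines the candidate edge $(\hat{y},y^\star)$, and the remaining one-dimensional randomness in $w_2$ controls the slope. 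The parameter $\delta$ enters precisely through Claim~\ref{delta properties:neighboring vertices} of Lemma~\ref{lemma:delta properties}, which lower-bounds $|a_i^\T(\hat{y}-y^\star)|$. Integrating over $(0,1]$ gives the expected number of slopes there, and a symmetry argument (swapping $w_1$ and $w_2$) handles slopes in $[1,\infty)$.

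Two smaller mismatches with the actual algorithm: no two-phase detour is needed, since $w_1$ and $w_2$ are constructed so that $x_1$ and $x_2$ are already extreme for $w_1^\T x$ and $w_2^\T x$ in the \emph{same} projection; and the randomness is not a uniformly random plane but comes from $\lambda,\mu$ uniform on $(0,1]^n$, so ``angular measure on the sphere'' estimates do not directly apply.
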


Let us emphasize that the algorithm is very simple and its running time depends only polynomially on~$m$, $n$
and the length of the path it computes.

Theorem~\ref{maintheorem} does not resolve the polynomial Hirsch conjecture as the value~$\delta$ can be exponentially small. Furthermore, it does not imply a good running time of the shadow vertex method for optimizing linear programs because for the variant considered in this paper both vertices have to be known. Contrary to this, in the optimization problem the objective is to determine the optimal vertex.
To compare our results with the result by Bonifas et al.~\cite{BonifasDEHN12},
 we show that, if~$A$ is an integer matrix, then $\frac{1}{\delta} \leq n \cdot \Delta^2$, which yields the following corollary.

\begin{corollary}
\label{maincorollaryI}
Let $A \in \ZZ^{m \times n}$ be an integer matrix and let $b \in \RR^m$ be a real-valued vector. Given vertices~$x_1$ and~$x_2$ of~$P$,
the shadow vertex algorithm efficiently computes a path from~$x_1$ to~$x_2$ on
the polytope~$P$ with expected length $O(\Delta^4 mn^4)$.
\end{corollary}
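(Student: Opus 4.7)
The plan is to reduce the corollary to Theorem~\ref{maintheorem} by proving the promised inequality $\tfrac{1}{\delta} \le n \Delta^2$ for integer matrices, so that plugging into the bound $O(mn^2/\delta^2)$ directly gives $O(\Delta^4 m n^4)$. Thus the entire content of the corollary is a lower bound on the sine of the angle between any edge of $P$ and any neighboring facet.

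Fix an edge $e$ of $P$ and a neighboring facet $F$. The edge is determined by $n-1$ tight constraints, say $a_{i_1}^\T x = b_{i_1}, \ldots, a_{i_{n-1}}^\T x = b_{i_{n-1}}$, and $F$ corresponds to some further constraint $a_j^\T x = b_j$. Let $v$ be a nonzero direction vector of~$e$. Since $\sin \theta = |a_j^\T v| / (\|a_j\|\,\|v\|)$ where $\theta$ is the angle between~$e$ and~$F$, I will bound $|a_j^\T v|$ from below and $\|a_j\|\,\|v\|$ from above, using the hypothesis that every sub-determinant of the integer matrix $A$ has absolute value at most~$\Delta$.

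The key observation is that $v$ can be written explicitly via Cramer's rule: its $k$-th coordinate equals $(-1)^k$ times the $(n-1) \times (n-1)$ sub-determinant of $A$ obtained by selecting rows $i_1, \ldots, i_{n-1}$ and deleting column $k$. Each such coordinate is therefore an integer of absolute value at most $\Delta$, giving $\|v\| \le \sqrt{n}\,\Delta$. Similarly, since every entry of~$A$ is a $1\times 1$ sub-determinant, $\|a_j\| \le \sqrt{n}\,\Delta$. For the numerator, expanding along the last row shows that $a_j^\T v$ is, up to sign, the $n \times n$ sub-determinant of $A$ with rows $i_1, \ldots, i_{n-1}, j$, hence an integer. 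Moreover, this integer is nonzero: if it vanished, the facet~$F$ would contain the entire edge~$e$, contradicting the assumption that~$F$ is only \emph{neighboring}~$e$ (i.e., contains exactly one endpoint). Therefore $|a_j^\T v| \ge 1$, and combining the three bounds yields $\sin \theta \ge 1/(n \Delta^2)$. Taking the infimum over all edges and neighboring facets gives $1/\delta \le n \Delta^2$.

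Substituting into Theorem~\ref{maintheorem}, the expected length of the path produced by the shadow vertex algorithm is
\[
    O\!\left(\tfrac{mn^2}{\delta^2}\right) \;=\; O\!\left( mn^2 \cdot (n\Delta^2)^2 \right) \;=\; O(\Delta^4 m n^4),
\]
as claimed. The only step that requires a little care is the non-vanishing $|a_j^\T v| \ge 1$: it is not automatic from the algebraic setup and relies crucially on the geometric meaning of ``neighboring'' stated in Section~\ref{sec:Parameter}; everything else is a direct determinantal bound.
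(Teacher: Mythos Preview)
Your approach is exactly the paper's: establish $1/\delta \le n\Delta^2$ by elementary determinantal bounds and substitute into Theorem~\ref{maintheorem}. The paper records this inequality as Claim~\ref{delta properties:comparison} of Lemma~\ref{lemma:delta properties}, proving it via the formula $\hat{\delta}=1/\|x\|$ with $x=[\N(a_{i_1}),\ldots,\N(a_{i_n})]^{-\T}e_n$ and Cramer's rule; your cross-product formulation $\hat{\delta}=|a_j^\T v|/(\|a_j\|\,\|v\|)$ is an equivalent and arguably cleaner packaging of the same identity, and the three bounds $|a_j^\T v|\ge 1$, $\|v\|\le\sqrt{n}\,\Delta_{n-1}$, $\|a_j\|\le\sqrt{n}\,\Delta_1$ match the paper's line by line.

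One point to tighten: you work with the informal description of $\delta$ from the introduction (angles between actual edges and neighboring facets of $P$), but the $\delta$ that appears in Theorem~\ref{maintheorem} is the formal $\delta(A)$ of Section~\ref{sec:Parameter}, a minimum over \emph{all} linearly independent $n$-tuples of rows of~$A$. This can be strictly smaller than the edge/facet quantity, since not every such tuple arises from a realized edge and neighboring facet of~$P$. Your algebra, however, applies verbatim to the formal definition: once $a_{i_1},\ldots,a_{i_{n-1}},a_j$ are linearly independent, $a_j^\T v$ is (up to sign) their $n\times n$ determinant and hence a nonzero integer---no appeal to the geometric ``neighboring'' hypothesis is needed. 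So your closing remark that the nonvanishing ``is not automatic from the algebraic setup and relies crucially on the geometric meaning of neighboring'' is actually backwards; linear independence alone does the job, and that is precisely what the formal definition of $\delta(A)$ hands you.
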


This bound is worse than the bound of Bonifas et al., but it is constructive. Furthermore, if~$A$ is a totally unimodular matrix, then $\Delta = 1$. Hence, we obtain the following corollary.

\begin{corollary}
\label{maincorollaryII}
Let $A \in \ZZ^{m \times n}$ be a totally unimodular matrix and let $b \in \RR^m$ be a vector. Given vertices~$x_1$ and~$x_2$ of~$P$,  the shadow vertex algorithm efficiently computes a path from~$x_1$ to~$x_2$ on
the polytope~$P$ with expected length $O(mn^4)$.
\end{corollary}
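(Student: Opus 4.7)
The plan is to obtain Corollary~\ref{maincorollaryII} as an immediate specialization of Corollary~\ref{maincorollaryI}. The only auxiliary fact I need is the defining property of totally unimodular matrices: every square submatrix of $A$ has determinant in $\SET{-1, 0, 1}$. In particular, the largest absolute value of any sub-determinant of $A$ satisfies $\Delta = 1$, and of course $A \in \ZZ^{m \times n}$ as required by Corollary~\ref{maincorollaryI}.

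With this observation in hand, I would simply invoke Corollary~\ref{maincorollaryI} for the given matrix $A$ and vector $b$. That corollary already guarantees that, given the two vertices $x_1$ and $x_2$, the shadow vertex algorithm efficiently produces a path from~$x_1$ to~$x_2$ along the edges of $P$ of expected length $O(\Delta^4 m n^4)$. Substituting $\Delta = 1$ collapses this bound to $O(m n^4)$, which is exactly the claim of Corollary~\ref{maincorollaryII}.

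There is no real obstacle here: all of the technical work is packaged inside Theorem~\ref{maintheorem} (the $O(mn^2/\delta^2)$ bound in terms of the geometric parameter $\delta$) and inside the estimate $1/\delta \le n \Delta^2$ for integer matrices that underlies Corollary~\ref{maincorollaryI}. The only thing I must be careful about is checking that the hypotheses on $b$ and on the availability of the two input vertices match; both corollaries are stated for an arbitrary real vector $b \in \RR^m$ and assume $x_1,x_2$ are given, so no additional assumptions need to be imposed and the derivation is genuinely a one-line specialization.
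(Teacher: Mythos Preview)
Your argument is correct and matches the paper's own derivation: right before stating Corollary~\ref{maincorollaryII} the paper simply observes that a totally unimodular matrix has $\Delta = 1$, so the bound of Corollary~\ref{maincorollaryI} specializes to $O(mn^4)$. There is nothing to add.
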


This is a significant improvement upon the previously best known constructive bound of $O(m^{16}n^3 \log^3 (mn))$ due to Dyer and Frieze because we can assume~$m\ge n$. Otherwise,~$P$ does not have vertices and the problem is ill-posed.

\paragraph{Organization of the paper} In Section~\ref{sec:Algo} we describe the shadow vertex algorithm.
In Section~\ref{sec:Outline} we give an outline of our analysis and present the main ideas.
After that, in Section~\ref{sec:Parameter}, we introduce the parameter~$\delta$ and discuss some of its properties. Section~\ref{sec:analysis} is devoted to the proof of Theorem~\ref{maintheorem}.
The probabilistic foundations of our analysis are provided in Section~\ref{sec:protheory}.

\section{The Shadow Vertex Algorithm}
\label{sec:Algo}

Let us first introduce some notation. For an integer $n \in \NN$ we denote by~$[n]$ the set $\SET{ 1, \ldots, n }$. Let $A \in \RR^{m \times n}$ be an $m \times n$-matrix and let $i \in [m]$ and $j \in [n]$ be indices. With~$A_{i,j}$ we refer to the $(m-1) \times (n-1)$-submatrix obtained from~$A$ by removing the $i^\text{th}$ row and the $j^\text{th}$ column. We call the determinant of any $k \times k$-submatrix of~$A$ a \emph{sub-determinant of~$A$ of size~$k$}. By~$\ID{n}$ we denote the $n \times n$-identity matrix $\diag(1, \ldots, 1)$ and by $\ZERO{m}{n}$ the $m \times n$-zero matrix. If $n \in \NN$ is clear from the context, then we define vector~$e_i$ to be the $i^\text{th}$ column of~$\ID{n}$. For a vector $x \in \RR^n$ we denote by $\|x\| = \|x\|_2$ the Euclidean norm of~$x$ and by $\N(x) = \frac{1}{\|x\|} \cdot x$ for $x \neq \NULL$ the normalization of vector~$x$.

\subsection{Shadow Vertex Pivot Rule}

Our algorithm is inspired by the shadow vertex pivot rule for the simplex algorithm. 
Before describing our algorithm, we will briefly explain the geometric intuition behind this pivot rule.
For a complete and more formal description, we refer the reader to~\cite{Borgwardt86} or~\cite{SpielmanTeng}. 
Let us consider the linear program~$\min c^\T x$ subject to~$x\in P$ for some vector~$c\in\RR^n$ and assume 
that an initial vertex~$x_1$ of the polytope~$P$ is known. For the sake of simplicity, we assume that
there is a unique optimal vertex~$x^{\star}$ of~$P$ that minimizes the objective function~$c^\T x$.
The shadow vertex pivot rule first computes a vector~$w\in\RR^n$ such that the vertex~$x_1$ 
minimizes the objective function~$w^\T x$ subject to~$x\in P$. Again for the sake of simplicity, let
us assume that the vectors~$c$ and~$w$ are linearly independent.

In the second step, the polytope~$P$ is projected onto the plane spanned by the vectors~$c$ and~$w$.
The resulting projection is a polygon~$P'$ and one can show that the projections 
of both the initial vertex~$x_1$ and the optimal vertex~$x^{\star}$ are vertices of this polygon.
Additionally every edge between two vertices~$x$ and~$y$ of~$P'$ corresponds to an edge of~$P$ between
two vertices that are projected onto~$x$ and~$y$, respectively. Due to these properties a path from the projection of~$x_1$ to
the projection of~$x^{\star}$ along the edges of~$P'$ corresponds to a path from~$x_1$ to~$x^{\star}$ 
along the edges of~$P$.

This way, the problem of finding a path from~$x_1$ to~$x^{\star}$ on the polytope~$P$ is reduced to
finding a path between two vertices of a polygon. There are at most two such paths and the shadow vertex
pivot rule chooses the one along which the objective~$c^\T x$ improves.

\subsection{Our Algorithm}
\label{sec:our algorithm}

As described in the introduction we consider the following problem: We are given a matrix $A = [a_1, \ldots, a_m]^\T \in \RR^{m \times n}$, a vector $b \in \RR^m$, and two vertices $x_1, x_2$ of the polytope $P = \SET{ x \in \RR^n \WHERE Ax \leq b }$. Our objective is to find a short path from~$x_1$ to~$x_2$ along the edges of~$P$.

We propose the following variant of the shadow vertex pivot rule to solve this problem:
First choose two vectors~$w_1,w_2\in\RR^n$ such that~$x_1$ uniquely minimizes~$w_1^\T x$ subject to~$x\in P$
and~$x_2$ uniquely maximizes~$w_2^\T x$ subject to~$x\in P$. Then project the polytope onto the plane
spanned by~$w_1$ and~$w_2$ in order to obtain a polygon~$P'$. Let us call the projection~$\pi$. By the same
arguments as for the shadow vertex pivot rule, it follows that~$\pi(x_1)$ and~$\pi(x_2)$ are vertices of~$P'$
and that a path from~$\pi(x_1)$ to~$\pi(x_2)$ along the edges of~$P'$ can be translated into a path
from~$x_1$ to~$x_2$ along the edges of~$P$. Hence, it suffices to compute such a path to solve the problem.
Again computing such a path is easy because~$P'$ is a two-dimensional polygon. 

The vectors~$w_1$ and~$w_2$ are not uniquely determined, but they can be chosen from cones that are determined
by the vertices~$x_1$ and~$x_2$ and the polytope~$P$. We choose~$w_1$ and~$w_2$ randomly from these cones.
A more precise description of this algorithm is given as Algorithm~\ref{algorithm:SV}.
 
\begin{algorithm*}[th]
  \caption{Shadow Vertex Algorithm}
  \label{algorithm:SV}
  \begin{algorithmic}[1]

    \STATE Determine~$n$ linearly independent rows $u_k^\T$ of~$A$ for which $u_k^\T x_1 = b_k$. \label{line:opt vectors I}

    \STATE Determine~$n$ linearly independent rows $v_k^\T$ of~$A$ for which $v_k^\T x_2 = b_k$. \label{line:opt vectors II}

    \STATE Draw vectors $\lambda, \mu \in (0, 1]^n$ independently and uniformly at random. \label{line:randomness}

    \STATE Set $w_1 = -\left[ \N(u_1), \ldots, \N(u_n) \right] \cdot \lambda$ and $w_2 = \left[ \N(v_1), \ldots, \N(v_n) \right] \cdot \mu$. \label{line:opt directions}

    \STATE Use the function $\pi: x \mapsto \big( w_1^\T x, w_2^\T x \big)$ to project~$P$ onto the Euclidean plane and obtain the shadow vertex polygon $P' = \pi(P)$. \label{line:shadow vertex polytope}

    \STATE Walk from~$\pi(x_1)$ along the edges of~$P'$ in increasing direction of the second coordinate until~$\pi(x_2)$ is found. \label{line:walk}

    \STATE Output the corresponding path of~$P$.

  \end{algorithmic}
\end{algorithm*}

Let us give some remarks about the algorithm above.
The vectors $u_1, \ldots, u_n$ in Line~\ref{line:opt vectors I} and the vectors $v_1, \ldots, v_n$ 
in Line~\ref{line:opt vectors II} must exist because~$x_1$ and~$x_2$ are vertices of~$P$.
The only point where our algorithm makes use of randomness is in Line~\ref{line:randomness}. 
By the choice of~$w_1$ and~$w_2$ in Line~\ref{line:opt directions}, $x_1$ is the unique optimum of the 
linear program $\min w_1^\T x$ s.t.\ $x \in P$ and~$x_2$ is the unique optimum of the linear
program $\max w_2^\T x$ s.t.\ $x \in P$. The former follows because for any~$y\in P$ with~$y\neq x_1$
there must be an index~$k\in[n]$ with~$u_k^\T x_1 < b_k$. The latter follows analogously. 
Note, that
$
  \|w_1\|
  \leq \sum_{k=1}^n \lambda_k \cdot \|\N(u_k)\|
  \leq \sum_{k=1}^n \lambda_k
  \leq n
$
and, similarly, $\|w_2\| \leq n$. The shadow vertex polygon~$P'$ in Line~\ref{line:shadow vertex polytope} has several important properties: The projections of~$x_1$ and~$x_2$ are vertices of~$P'$ and all edges of~$P'$ correspond to projected edges of~$P$. Hence, any path on the edges of~$P'$ is the projection of a path on the edges of~$P$. Though we call~$P'$ a polygon, it does not have to be bounded. This is the case if~$P$ is unbounded in the directions~$w_1$ or~$-w_2$. Nevertheless, there is always a path from~$x_1$ to~$x_2$ which will be found in Line~\ref{line:walk}.
For more details about the shadow vertex pivot rule and formal proofs of these properties, we refer to the book of Borgwardt~\cite{Borgwardt86}.

To give a bit intuition why these statements hold true, consider the projection depicted in Figure~\ref{fig:shadow vertex}. We denote the first coordinate of the Euclidean plane by~$\xi$ and the second coordinate by~$\eta$. Since~$w_1$ and~$w_2$ are chosen such that~$x_1$ and~$x_2$ are, among the points of~$P$, optimal for the function $x \mapsto w_1^\T x$ and $x \mapsto w_2^\T x$, respectively, the projections~$\pi(x_1)$ and~$\pi(x_2)$ of~$x_1$ and~$x_2$ must be the leftmost vertex and the topmost vertex of $P' = \pi(P)$, respectively. As~$P'$ is a (not necessarily bounded) polygon, this implies that if we start in vertex~$\pi(x_1)$ and follow the edges of~$P'$ in direction of increasing values of~$\eta$, then we will end up in~$\pi(x_2)$ after a finite number of steps. This is not only true if~$P'$ is bounded (as depicted by the dotted line and the dark gray area) but also if~$P$ is unbounded (as depicted by the dashed lines and the dark gray plus the light gray area). 
Moreover, note that the slopes of the edges of the path from~$\pi(x_1)$ to~$\pi(x_2)$ are positive and monotonically decreasing.

\begin{figure}
  \begin{center}
    \includegraphics[width=0.3\textwidth]{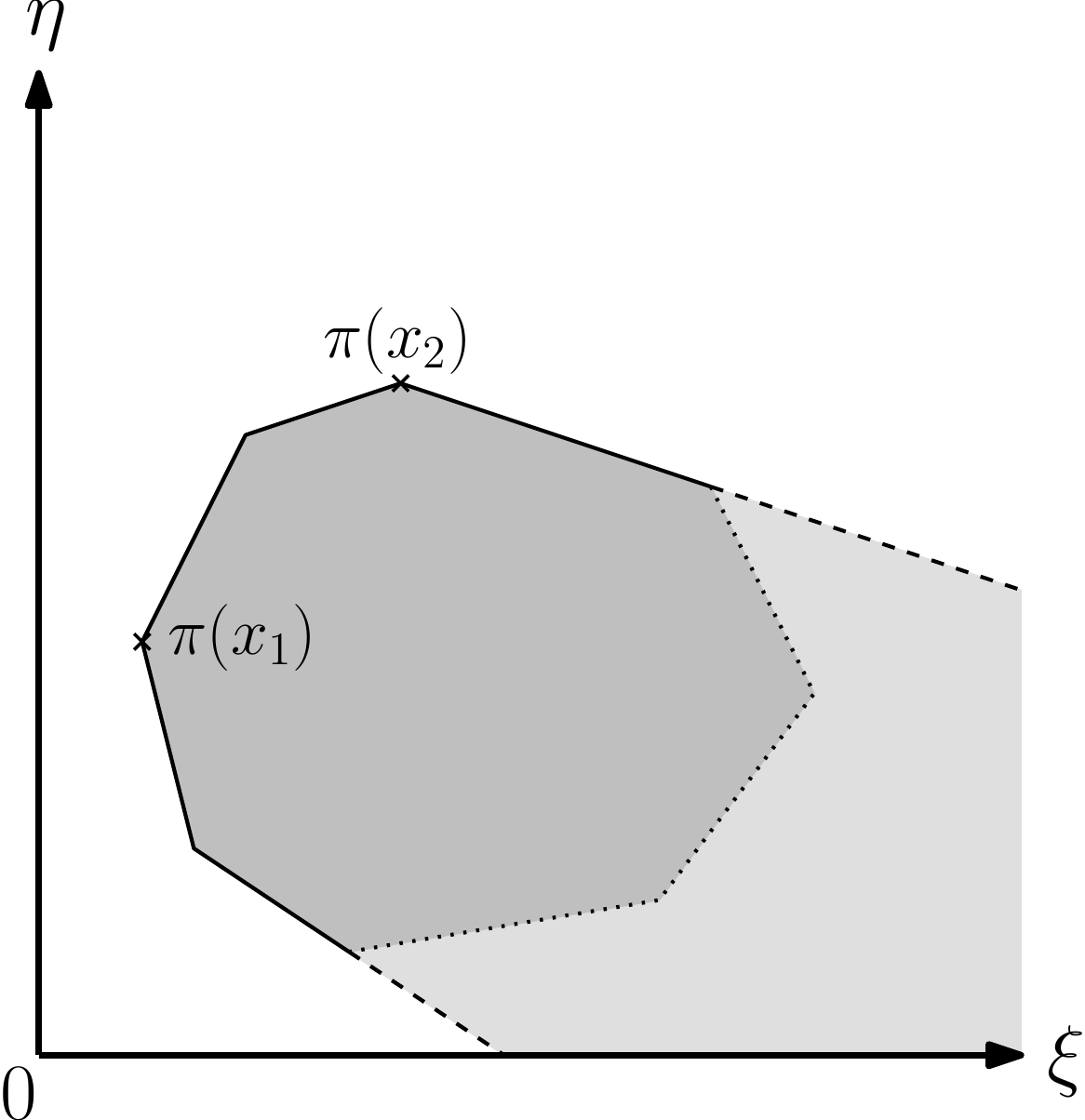}
  \end{center}

  \caption{Shadow polygon~$P'$}
  \label{fig:shadow vertex}
\end{figure}

\section{Degeneracy}
\label{sec:degeneracy}

Any degenerate polytope~$P$ can be made non-degenerate by
perturbing the vector~$b$ by a tiny amount of random noise. This way, another 
polytope~$\tilde{P}$ is obtained that is non-degenerate with probability one. Any degenerate
vertex of~$P$ at which~$\ell>n$ constraints are tight generates at most~$\binom{\ell}{n}$ vertices of~$\tilde{P}$
that are all very close to each other if the perturbation of~$b$ is small. 
We say that two vertices of~$\tilde{P}$
that correspond to the same vertex of~$P$ are in the same equivalence class.

If the perturbation of the vector~$b$ is small enough, then any edge between two vertices of~$\tilde{P}$
in different equivalence classes corresponds to an edge in~$P$ between the vertices that generated these
equivalence classes.
We apply the shadow vertex algorithm to the polytope~$\tilde{P}$ to find a path~$R$ between two arbitrary vertices
from the equivalence classes generated by~$x_1$ and~$x_2$, respectively.
Then we translate this path into a walk from~$x_1$ to~$x_2$ on the polytope~$P$ by mapping each
vertex on the path~$R$ to the vertex that generated its equivalence class. This way, we obtain
a walk from~$x_1$ to~$x_2$ on the polytope~$P$ that may visit vertices multiple times and may also stay
in the same vertex for some steps. In the latter type of steps only the algebraic representation of the current vertex
is changed. As this walk on~$P$ has the same length as the path that the shadow vertex
algorithm computes on~$\tilde{P}$, the upper bound we derive for the length of~$R$ also applies to
the degenerate polytope~$P$.  

Of course the perturbation of the vector~$b$ might change the shape of the
polytope~$P$. In this context it is important to point out that the parameter~$\delta$, which we define in the
following, only depends on the matrix~$A$ and, thus, is independent of the
right-hand side~$b$. Consequently, the parameter~$\delta$ of the original
polytope~$P$ can also be used to describe the behavior of the shadow vertex simplex
algorithm on the polytope~$\tilde{P}$.

\section{Outline of the Analysis}
\label{sec:Outline}

In the remainder of this paper we assume that the polytope~$P$ is
non-degenerate, i.e., for each vertex~$x$ of~$P$ there are exactly~$n$
indices~$i$ for which $a_i^\T x = b_i$. This implies that for any edge between two vertices~$x$ and~$y$ of~$P$
there are exactly~$n-1$ indices~$i$ for which~$a_i^\T x = a_i^\T y = b_i$. According to Section~\ref{sec:degeneracy}
this assumption is justified.

From the description of the shadow vertex algorithm it is clear that the main step in proving
Theorem~\ref{maintheorem} is to bound the expected number of edges on the path from~$\pi(x_1)$ to~$\pi(x_2)$
on the polygon~$P'$. In order to do this, we look at the slopes of the edges on this path.
As we discussed above, the sequence of slopes is monotonically decreasing. We will show that due to
the randomness in the objective functions~$w_1$ and~$w_2$, it is even strictly decreasing with probability one.
Furthermore all slopes on this path are bounded from below by~0.

Instead of counting the edges on the path from~$\pi(x_1)$ to~$\pi(x_2)$ directly, 
we will count the number of different slopes in the interval~$[0,1]$ and
we observe that the expected number of slopes from the interval $[0,\infty)$ is
twice the expected number of slopes from the interval~$[0,1]$.
In order to count the number of slopes in~$[0,1]$,
we partition the interval~$[0,1]$ into several small subintervals and we bound for each of these subintervals~$I$ the expected number of slopes in~$I$. Then
we use linearity of expectation to obtain an upper bound on the expected number of different slopes
in~$[0,1]$, which directly translates into an upper bound on the expected number of edges on
the path from~$\pi(x_1)$ to~$\pi(x_2)$.

We choose the subintervals so small that, with high probability, none of them contains more than one slope.
Then, the expected number of slopes in a subinterval~$I=(t,t+\e]$ is approximately equal to
the probability that there is a slope in the interval~$I$.
In order to bound this probability, we use a technique reminiscent of the principle of deferred decisions
that we have already used in~\cite{BrunschR12}. The main idea is to split the random draw
of the vectors~$w_1$ and~$w_2$ in the shadow vertex algorithm into two steps.
The first step reveals enough information about the realizations of these vectors
to determine the last edge~$e=(\hat{p},p^{\star})$ on the path from~$\pi(x_1)$ to~$\pi(x_2)$
whose slope is bigger than~$t$ (see Figure~\ref{fig:slopes}). Even though~$e$ is determined in the
first step, its slope is not. 
We argue that there is still enough randomness left in
the second step to bound the probability that the slope of~$e$ lies in the interval~$(t,t+\e]$ from above,
yielding Theorem~\ref{maintheorem}.

We will now give some more details on how the random draw of the vectors~$w_1$ and~$w_2$ is partitioned.
Let~$\hat{x}$ and~$x^{\star}$ be the vertices of the polytope~$P$ that are projected onto~$\hat{p}$
and~$p^{\star}$, respectively. Due to the non-degeneracy of the polytope~$P$, there are exactly~$n-1$
constraints that are tight for both~$\hat{x}$ and~$x^{\star}$ and there is a unique constraint~$a_i^\T x\le b_i$
that is tight for~$x^{\star}$ but not for~$\hat{x}$. In the first step the vector~$w_1$ is completely
revealed while instead of~$w_2$ only an element~$\tilde{w}_2$ from the ray~$\SET{ w_2 + \gamma \cdot a_i \WHERE \gamma \geq 0}$
is revealed. We then argue that knowing~$w_1$ and~$\tilde{w}_2$ suffices to identify the edge~$e$.
The only randomness left in the second step is the exact position of the vector~$w_2$ on the
ray~$\SET{ \tilde{w}_2 - \gamma \cdot a_i \WHERE \gamma \geq 0}$, which suffices to bound
the probability that the slope of~$e$ lies in the interval $(t,t+\e]$.

Let us remark that the proof of Theorem~\ref{maintheorem} is inspired by the recent smoothed analysis of the successive
shortest path algorithm for the minimum-cost flow problem~\cite{BrunschCMR13}.  
Even though the general structure bears some similarity, the details of our analysis are
much more involved.

\section{The Parameter~\texorpdfstring{\boldmath$\delta$\unboldmath}{delta}}
\label{sec:Parameter}

In this section we define the parameter~$\delta$ that describes the flatness of the vertices of the polytope and state some relevant properties.

\begin{definition}\noindent
\begin{enumerate}

  \item Let $z_1, \ldots, z_n \in \RR^n$ be linearly independent vectors and let $\varphi \in (0, \frac{\pi}{2}]$ be the angle between~$z_n$ and the hyperplane $\SPAN{ z_1, \ldots, z_{n-1} }$. By $\hat{\delta}(\SET{ z_1, \ldots, z_{n-1} }, z_n) = \sin \varphi$ we denote the sine of angle~$\varphi$. Moreover, we set $\delta(z_1, \ldots, z_n) = \min_{k \in [n]} \hat{\delta}(\SET{ z_i \WHERE i \in [n] \setminus \SET{k} }, z_k)$.

  \item Given a matrix $A = [a_1, \ldots, a_m]^\T \in \RR^{m \times n}$, we set
  \vspace{-0.5em}\[
    \delta(A) = \min \SET{ \delta(a_{i_1}, \ldots, a_{i_n}) \WHERE a_{i_1}, \ldots, a_{i_n}\ \text{linearly independent} } \DOT \vspace{-0.5em}
  \]

\end{enumerate}
\end{definition}

The value $\hat{\delta}(\SET{ z_1, \ldots, z_{n-1} }, z_n)$ describes how orthogonal~$z_n$ is to the span of $z_1, \ldots, z_{n-1}$. If $\varphi \approx 0$, i.e., $z_n$ is close to the span of $z_1, \ldots, z_{n-1}$, then $\hat{\delta}(\SET{ z_1, \ldots, z_{n-1} }, z_n) \approx 0$. On the other hand, if~$z_n$ is orthogonal to $z_1, \ldots, z_{n-1}$, then $\varphi = \frac{\pi}{2}$ and, hence, $\hat{\delta}(\SET{ z_1, \ldots, z_{n-1} }, z_n) = 1$. The value $\hat{\delta}(\SET{ z_1, \ldots, z_{n-1} }, z_n)$ equals the distance between both faces of the parallelotope~$Q$, given by $Q = \SET{ \sum_{i=1}^n \alpha_i \cdot \N(z_i) \WHERE \alpha_i \in [0, 1] }$, that are parallel to $\SPAN{ z_1, \ldots, z_{n-1} }$ and is scale invariant.

The value $\delta(z_1, \ldots, z_n)$ equals twice the inner radius~$r_n$ of the parallelotope~$Q$ and, thus, is a measure of the flatness of~$Q$: A value $\delta(z_1, \ldots, z_n) \approx 0$ implies that~$Q$ is nearly $(n-1)$-dimensional. On the other hand, if $\delta(z_1, \ldots, z_n) = 1$, then the vectors $z_1, \ldots, z_n$ are pairwise orthogonal, that is, $Q$ is an $n$-dimensional unit cube.

The next lemma lists some useful statements concerning the parameter $\delta \DEF \delta(A)$ including a connection to the parameters~$\Delta_1$, $\Delta_{n-1}$, and~$\Delta$ introduced in the paper of Bonifas et al.~\cite{BonifasDEHN12}.

\begin{lemma}
\label{lemma:delta properties}
Let $z_1, \ldots, z_n \in \RR^n$ be linearly independent vectors, let $A \in \RR^{m \times n}$ be a matrix, let $b \in \RR^m$ be a vector, and let $\delta = \delta(A)$. Then, the following claims hold true:
\begin{enumerate}

  \item \label{delta properties:inverse} If~$M$ is the inverse of $[\N(z_1), \ldots, \N(z_n)]^\T$, then
  \vspace{-0.5em}\[
    \delta(z_1, \ldots, z_n)
    = \frac{1}{\max_{k \in [n]} \|m_k\|}
    \leq \frac{\sqrt{n}}{\max_{k \in [n]} \|M_k\|} \COMMA \vspace{-0.5em}
  \]
  where $[m_1, \ldots, m_n] = M$ and $[M_1, \ldots, M_n] = M^\T$.
  
  \item \label{delta properties:orthogonal matrix} If $Q \in \RR^{n \times n}$ is an orthogonal matrix, then $\delta(Qz_1, \ldots, Qz_n) = \delta(z_1, \ldots, z_n)$.

  \item \label{delta properties:neighboring vertices} Let~$y_1$ and~$y_2$ be two neighboring vertices of $P = \SET{ x \in \RR^n \WHERE Ax \leq b }$ and let~$a_i^\T$ be a row of~$A$. If $a_i^\T \cdot (y_2-y_1) \neq 0$, then $|a_i^\T \cdot (y_2-y_1)| \geq \delta \cdot \|y_2-y_1\|$.

  \item \label{delta properties:comparison} If~$A$ is an integral matrix, then $\frac{1}{\delta} \leq n \Delta_1 \Delta_{n-1} \leq n \Delta^2$, where~$\Delta$, $\Delta_1$, and $\Delta_{n-1}$ are the largest absolute values of any sub-determinant of~$A$ of arbitrary size, of size~$1$, and of size~$n-1$, respectively.
  
\end{enumerate}
\end{lemma}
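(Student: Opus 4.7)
The four claims are essentially independent, but Claim~\ref{delta properties:inverse} does the heavy lifting: it gives a concrete algebraic formula for $\delta(z_1,\ldots,z_n)$ that immediately yields Claim~\ref{delta properties:comparison} via Cramer's rule. My plan is therefore to tackle Claim~\ref{delta properties:inverse} first, then dispatch Claims~\ref{delta properties:orthogonal matrix} and~\ref{delta properties:neighboring vertices} with short geometric arguments, and finally settle Claim~\ref{delta properties:comparison} by a determinant computation.

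For Claim~\ref{delta properties:inverse}, the key observation is that $M$ is, by definition, the dual basis of the normalized rows: if $\hat z_i \DEF \N(z_i)$ then $\hat z_i^\T m_k = \delta_{ik}$. Hence $m_k$ is orthogonal to every $\hat z_i$ with $i\neq k$ and therefore spans the one-dimensional orthogonal complement of $\SPAN{z_i : i\neq k}$; the foot of the perpendicular from $\hat z_k$ to that hyperplane has length $|\hat z_k^\T m_k|/\|m_k\| = 1/\|m_k\|$, and since $\|\hat z_k\|=1$ this distance is exactly $\hat\delta(\SET{z_i : i\neq k},z_k)$. Minimizing over $k$ gives the claimed equality. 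The second inequality follows from the Frobenius-norm identity $\sum_k\|m_k\|^2 = \|M\|_F^2 = \sum_k\|M_k\|^2$, which forces $\max_k\|m_k\| \geq \max_k\|M_k\|/\sqrt n$.

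Claim~\ref{delta properties:orthogonal matrix} is immediate: an orthogonal $Q$ preserves norms and inner products, hence the angle between $Qz_k$ and $\SPAN{Qz_i : i\neq k} = Q\cdot\SPAN{z_i : i\neq k}$ is unchanged. For Claim~\ref{delta properties:neighboring vertices}, I would use non-degeneracy to extract the $n-1$ rows $a_{j_1},\ldots,a_{j_{n-1}}$ that are tight at both vertices $y_1,y_2$; the edge direction $y_2-y_1$ is then perpendicular to each $a_{j_s}$ and spans the one-dimensional orthogonal complement of $\SPAN{a_{j_s}}$. If $a_i^\T(y_2-y_1)\neq 0$, then $a_i\notin\SPAN{a_{j_s}}$, so the $n$ vectors $a_{j_1},\ldots,a_{j_{n-1}},a_i$ are linearly independent and the definition of $\delta(A)$ gives $\hat\delta(\SET{a_{j_s}},a_i)\geq\delta$. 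Writing $d=(y_2-y_1)/\|y_2-y_1\|$ and noting that $d$ is a unit normal to $\SPAN{a_{j_s}}$, the component of $a_i$ along $d$ has length $\|a_i\|\sin\varphi\geq\|a_i\|\cdot\delta$, which gives the claim after the standard normalization convention puts the factor of $\|a_i\|$ on the appropriate side.

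Finally, for Claim~\ref{delta properties:comparison} I fix any linearly independent $n$-tuple $a_{i_1},\ldots,a_{i_n}$ of integer rows, form the $n\times n$ matrix $B=[a_{i_1},\ldots,a_{i_n}]^\T$, and write $Z^\T = D^{-1}B$ with $D=\diag(\|a_{i_k}\|)$; Claim~\ref{delta properties:inverse} then identifies $M$ with $B^{-1}D$, so $m_k = \|a_{i_k}\|\cdot(B^{-1})_k$. Cramer's rule together with $|\det B|\geq 1$ (a nonzero integer) bounds each $|(B^{-1})_{jk}|$ by $|\det B_{k,j}|\leq\Delta_{n-1}$, and the integer entry bound gives $\|a_{i_k}\|\leq\sqrt n\,\Delta_1$, so $\|m_k\|\leq n\Delta_1\Delta_{n-1}\leq n\Delta^2$. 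Maximizing over $k$ and over all index choices yields $1/\delta\leq n\Delta_1\Delta_{n-1}\leq n\Delta^2$. The main point requiring care, and the place where I would double-check bookkeeping, is the interplay between normalized and unnormalized rows that runs through Claims~\ref{delta properties:inverse},~\ref{delta properties:neighboring vertices}, and~\ref{delta properties:comparison}, particularly ensuring that the factor of $\|a_{i_k}\|$ in $m_k = \|a_{i_k}\|\cdot(B^{-1})_k$ is used consistently with how entries of $A$ are bounded in the integer case.
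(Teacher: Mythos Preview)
Your proposal is correct and follows essentially the same route as the paper. The paper also reduces Claim~\ref{delta properties:inverse} to the dual-basis identity $\N(z_i)^\T m_k=\delta_{ik}$ (phrased there as ``solve $[\N(z_1),\ldots,\N(z_n)]^\T x=e_k$''), proves Claim~\ref{delta properties:neighboring vertices} by extracting the $n-1$ tight rows along the edge and reading off $\hat\delta(\SET{a_{j_s}},a_i)$ from the edge direction, and proves Claim~\ref{delta properties:comparison} via Cramer's rule together with $|\det\tilde A|\ge 1$; the only cosmetic difference is that for Claim~\ref{delta properties:orthogonal matrix} the paper plugs $Q$ into the algebraic formula from Claim~\ref{delta properties:inverse} rather than invoking angle preservation directly. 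Your caveat about the factor $\|a_i\|$ in Claim~\ref{delta properties:neighboring vertices} is well placed: the paper's own argument implicitly uses the standing normalization $\|a_i\|=1$ introduced at the start of Section~\ref{sec:analysis}, so the ``bookkeeping'' you flag is exactly the assumption you need.
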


\begin{proof}
First of all we derive a simple formula for $\hat{\delta}(\SET{ z_1, \ldots, z_{n-1} }, z_n)$. For this, assume that the vectors $z_1, \ldots, z_n$ are normalized. Now consider a normal vector $x \neq \NULL$ of $\SPAN{z_1, \ldots, z_{n-1}}$ that lies in the same halfspace as~$z_n$. Let $\varphi \in (0, \frac{\pi}{2}]$ be the angle between~$z_n$ and $\SPAN{z_1, \ldots, z_{n-1}}$ and let $\psi \in [0, \frac{\pi}{2})$ be the angle between~$z_n$ and~$x$. Clearly, $\varphi + \psi = \frac{\pi}{2}$. Consequently,
\[
  \hat{\delta}(\SET{ z_1, \ldots, z_{n-1} }, z_n)
  = \sin \varphi
  = \sin \left( \frac{\pi}{2} - \psi \right)
  = \cos \psi
  = \frac{z_n^\T x}{\|x\|} \DOT
\]
The last fraction is invariant under scaling of~$x$. Since~$x$ and~$z_n$ lie in the same halfspace, w.l.o.g.\ we can assume that $z_n^\T x = 1$. Hence, $\hat{\delta}(\SET{ z_1, \ldots, z_{n-1} }, z_n) = \frac{1}{\|x\|}$, where~$x$ is the unique solution of the equation $[z_1, \ldots, z_{n-1}, z_n]^\T \cdot x = (0, \ldots, 0, 1)^\T = e_n$. If the vectors $z_1, \ldots, z_n$ are not normalized, then we obtain
\[
  \hat{\delta}(\SET{ z_1, \ldots, z_{n-1} }, z_n)
  = \hat{\delta}(\SET{ \N(z_1), \ldots, \N(z_{n-1}) }, \N(z_n))
  = \frac{1}{\|x\|} \COMMA
\]
where $x = [\N(z_1), \ldots, \N(z_{n-1}), \N(z_n)]^{-\T} \cdot e_n$. Since for the previous line of reasoning we can relabel the vectors $z_1, \ldots, z_n$ arbitrarily, this implies
\begin{align*}
  \delta(z_1, \ldots, z_n)
  &= \min_{k \in [n]} \frac{1}{\left\| [\N(z_1), \ldots, \N(z_n)]^{-\T} \cdot e_k \right\|} \cr
  &= \frac{1}{\max \SET{ \|x\| \WHERE \text{$x$ is column of $[\N(z_1), \ldots, \N(z_n)]^{-\T}$} }} \DOT
\end{align*}
This yields the equation in Claim~\ref{delta properties:inverse}. Due to
\[
  \left( \max_{k \in [n]} \|M_k\| \right)^2
  \leq \sum_{k \in [n]} \|M_k\|^2
  = \sum_{k \in [n]} \|m_k\|^2
  \leq n \cdot \left( \max_{k \in [n]} \|m_k\| \right)^2
\]
we obtain the inequality $\frac{1}{\max_{k \in [n]} \|m_k\|} \leq \frac{\sqrt{n}}{\max_{k \in [n]} \|M_k\|}$ stated in Claim~\ref{delta properties:inverse}.

For Claim~\ref{delta properties:orthogonal matrix} observe that
\begin{align*}
  [\N(Q z_1), \ldots, \N(Q z_n)]^{-\T}
  &= [Q \N(z_1), \ldots, Q \N(z_n)]^{-\T} \cr
  &= (Q \cdot [\N(z_1), \ldots, \N(z_n)])^{-\T} \cr
  &= ([\N(z_1), \ldots, \N(z_n)]^{-1} \cdot Q^\T)^\T \cr
  &= Q \cdot [\N(z_1), \ldots, \N(z_n)]^{-\T}
\end{align*}
for any orthogonal matrix~$Q$. Therefore, we get
\begin{align*}
  \frac{1}{\delta(Qz_1, \ldots, Qz_n)}
  &= \max \SET{ \|x\| \WHERE \text{$x$ is column of $[\N(Qz_1), \ldots, \N(Qz_n)]^{-\T}$} } \cr
  &= \max \SET{ \|Qy\| \WHERE \text{$y$ is column of $[\N(z_1), \ldots, \N(z_n)]^{-\T}$} } \cr
  &= \max \SET{ \|y\| \WHERE \text{$y$ is column of $[\N(z_1), \ldots, \N(z_n)]^{-\T}$} } \cr
  &= \frac{1}{\delta(z_1, \ldots, z_n)} \DOT
\end{align*}
For Claim~\ref{delta properties:neighboring vertices} let~$y_1$ and~$y_2$ be two neighboring vertices of~$P$. Then, there are exactly~$n-1$ indices~$j$ for which $a_j^\T \cdot (y_2-y_1) = 0$. We denote them by $j_1, \ldots, j_{n-1}$. If there is an index~$i$ for which $a_i^\T \cdot (y_2-y_1) \neq 0$, then $a_{j_1}, \ldots, a_{j_{n-1}}, a_i$ are linearly independent. Consequently, $\delta(a_{j_1}, \ldots, a_{j_{n-1}}, a_i) \geq \delta$. Let us assume that $a_i^\T \cdot (y_2-y_1) \geq 0$. (Otherwise, consider $a_i^\T \cdot (y_1-y_2)$ instead.) Since $y_2-y_1$ is a normal vector of $\SPAN{a_{j_1}, \ldots, a_{j_{n-1}}}$ that lies in the same halfspace as~$a_i$, we obtain
\[
  \frac{a_i^\T \cdot (y_2-y_1)}{\|y_2-y_1\|}
  = \hat{\delta}(\SET{ a_{j_1}, \ldots, a_{j_{n-1}} }, a_i)
  \geq \delta(a_{j_1}, \ldots, a_{j_{n-1}}, a_i)
  \geq \delta
\]
and, thus, $a_i^\T \cdot (y_2-y_1) \geq \delta \cdot \|y_2-y_1\|$.

For proving Claim~\ref{delta properties:comparison} we can focus on showing the first inequality. The second one follows from $\Delta \geq \max \SET{ \Delta_1, \Delta_{n-1} }$. For this, it suffices to show that for~$n$ arbitrary linearly independent rows $a_{i_1}^\T, \ldots, a_{i_n}^\T$ of~$A$ the inequality
\[
  \frac{1}{\hat{\delta}(\SET{ a_{i_1}, \ldots, a_{i_{n-1}} }, a_{i_n})}
  \leq n \Delta_1 \Delta_{n-1}
\]
holds. By previous observations we know that
\[
  \frac{1}{\hat{\delta}(\SET{ a_{i_1}, \ldots, a_{i_{n-1}} }, a_{i_n})}
    = \|x\|
\]
where~$x$ is the unique solution of $\hat{A} x = e_n$ for $\hat{A} = [\N(a_{i_1}), \ldots, \N(a_{i_n})]^\T$. Let $\tilde{A} = [a_{i_1}, \ldots, a_{i_n}]^\T$. Then,
\begin{align*}
  \|x\|^2
  &= \sum_{k=1}^n x_k^2
  = \sum_{k=1}^n \left( \frac{\det(\hat{A}_{n,k})}{\det(\hat{A})} \right)^2
  = \sum_{k=1}^n \left( \frac{\det(\tilde{A}_{n,k}) \cdot \prod_{j=1}^{n-1} \frac{1}{\|a_{i_j}\|}}{\det(\tilde{A}) \cdot \prod_{j=1}^n \frac{1}{\|a_{i_j}\|}} \right)^2 \cr
  &= \sum_{k=1}^n \left( \frac{\det(\tilde{A}_{n,k}) \cdot \|a_{i_n}\|}{\det(\tilde{A})} \right)^2
  \leq \sum_{k=1}^n \left( \frac{\Delta_{n-1} \cdot \sqrt{n} \Delta_1}{1} \right)^2
  = n^2 \Delta_1^2 \Delta_{n-1}^2 \DOT
\end{align*}
Some of the equations need further explanation: Due to Cramer's rule, we have $x_k = \frac{\det(\bar{A})}{\det(\hat{A})}$, where~$\bar{A}$ is obtained from~$\hat{A}$ by replacing the $k ^\text{th}$ column by the right-hand side~$e_n$ of the equation $\hat{A} x = e_n$. Laplace's formula yields $|\det(\bar{A})| = |\det(\hat{A}_{n,k})|$. Hence, the second equation is true. For the third equation note that the $k^\text{th}$ row of matrix~$\hat{A}$ is the same as the $k^\text{th}$ row of matrix~$\tilde{A}$ up to a factor of $\frac{1}{\|a_{i_k}\|}$. The inequality follows from $|\det(\tilde{A}_{n,k})| \leq \Delta_{n-1}$ since this is a sub-determinant of~$A$ of size~$n-1$, from $\|a_{i_n}\| \leq \sqrt{n} \cdot \|a_{i_n}\|_\infty \leq \sqrt{n} \Delta_1$, since $\|a_{i_n}\|_\infty$ is a sub-determinant of~$A$ of size~$1$, and from $|\det(\tilde{A})| \geq 1$ since~$\tilde{A}$ is invertible and integral by assumption. Hence,
\[
  \frac{1}{\hat{\delta}(\SET{ a_{i_1}, \ldots, a_{i_{n-1}} }, a_{i_n})}
  = \|x\|
  \leq n \Delta_1 \Delta_{n-1} \DOT \qedhere
\]
\end{proof}

\section{Analysis}
\label{sec:analysis}

For the proof of Theorem~\ref{maintheorem} we assume that $\|a_i\| = 1$ for all $i \in [m]$. This entails no loss of generality since normalizing the rows of matrix~$A$ (and scaling the right-hand side~$b$ appropriately) does neither change the behavior of our algorithm nor does it change the parameter $\delta = \delta(A)$.

For given linear functions~$L_1$ and~$L_2$, we denote by $\pi = \pi_{L_1, L_2}$ the function $\pi \colon \RR^n \to \RR^2$, given by $\pi(x) = (L_1(x), L_2(x))$. Note, that $n$-dimensional vectors can be treated as linear functions. By $P' = P'_{L_1, L_2}$ we denote the projection $\pi(P)$ of polytope~$P$ onto the Euclidean plane, and by $R = R_{L_1, L_2}$ we denote the path from~$\pi(x_1)$ to~$\pi(x_2)$ along the edges of polygon~$P'$.

Our goal is to bound the expected number of edges of the path $R = R_{w_1, w_2}$ which is random since~$w_1$ and~$w_2$ depend on the realizations of the random vectors~$\lambda$ and~$\mu$. Each edge of~$R$ corresponds to a slope in $(0, \infty)$. These slopes are pairwise distinct with probability one (see Lemma~\ref{lemma:failure probability II}). Hence, the number of edges of~$R$ equals the number of distinct slopes of~$R$. In order to bound the expected number of distinct slopes we first restrict our attention to slopes in the interval $(0, 1]$.

\begin{definition}
\label{definition:failure event}
For a real $\e > 0$ let~$\F_\e$ denote the event that there are three pairwise distinct vertices $z_1, z_2, z_3$ of~$P$ such that~$z_1$ and~$z_3$ are neighbors of~$z_2$ and such that
\[
  \left| \frac{w_2^\T \cdot (z_2-z_1)}{w_1^\T \cdot (z_2-z_1)} - \frac{w_2^\T \cdot (z_3-z_2)}{w_1^\T \cdot (z_3-z_2)} \right| \leq \e \DOT
\]
\end{definition}

Note that if event~$\F_\e$ does not occur, then all slopes of~$R$ differ by more than~$\e$. Particularly, all slopes are pairwise distinct. First of all we show that event~$\F_\e$ is very unlikely to occur if~$\e$ is chosen sufficiently small.

\begin{lemma}
\label{lemma:failure probability I}
The probability that there are two neighboring vertices $z_1, z_2$ of~$P$ such that $|w_1^\T \cdot (z_2-z_1)| \leq \e \cdot \|z_2-z_1\|$ is bounded from above by $\frac{2m^n\e}{\delta}$.
\end{lemma}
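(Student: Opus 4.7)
The plan is to fix a single pair of neighboring vertices $z_1, z_2$ of $P$, bound the bad probability for that pair, and then apply a union bound over all edges. Set $d = z_2 - z_1$. Using the normalization $\|u_k\| = 1$ that is in force throughout the analysis section, so that $\N(u_k) = u_k$, Line~\ref{line:opt directions} of Algorithm~\ref{algorithm:SV} gives
\[
  w_1^\T d \;=\; -\sum_{k=1}^n \lambda_k \cdot u_k^\T d.
\]
The vectors $u_1, \ldots, u_n$ are linearly independent by the choice in Line~\ref{line:opt vectors I}, hence form a basis of $\RR^n$. Consequently, at least one index $k^\star \in [n]$ satisfies $u_{k^\star}^\T d \neq 0$. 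By Claim~\ref{delta properties:neighboring vertices} of Lemma~\ref{lemma:delta properties} applied to the neighboring vertices $z_1, z_2$ and the row $u_{k^\star}^\T$ of $A$, this forces $|u_{k^\star}^\T d| \geq \delta \cdot \|d\|$.

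Next I invoke the principle of deferred decisions and condition on the values of $\lambda_k$ for $k \neq k^\star$. Under this conditioning, $w_1^\T d$ is an affine function of the single variable $\lambda_{k^\star}$ whose slope has absolute value $|u_{k^\star}^\T d| \geq \delta \|d\|$. The event $|w_1^\T d| \leq \e \|d\|$ therefore restricts $\lambda_{k^\star}$ to an interval of length at most $\frac{2\e \|d\|}{|u_{k^\star}^\T d|} \leq \frac{2\e}{\delta}$. Since $\lambda_{k^\star}$ is independent of the other coordinates and uniformly distributed on $(0,1]$, the conditional probability of the bad event is at most $\frac{2\e}{\delta}$, and averaging over the conditioning preserves the bound.

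It remains to count edges. Because $P$ is assumed non-degenerate, every vertex is determined by a choice of $n$ tight constraints from the $m$ rows of $A$, so $P$ has at most $\binom{m}{n}$ vertices; moreover, $P$ is simple, so each vertex has exactly $n$ neighbors. Hence the number of unordered edges is at most $\tfrac{n}{2}\binom{m}{n} \leq \tfrac{m^n}{2(n-1)!} \leq m^n$. A union bound over all edges yields the desired inequality $\tfrac{2\e}{\delta} \cdot m^n = \tfrac{2 m^n \e}{\delta}$.

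The only mildly delicate step is choosing the right coordinate to defer: once one recognizes that Claim~\ref{delta properties:neighboring vertices} applies to any row of $A$, including the rows $u_k$ defining $x_1$ (even though these constraints need not be tight at $z_1$ or $z_2$), and that linear independence of the $u_k$'s guarantees at least one nonzero projection, the rest is a routine interval-probability computation and a crude edge count. I do not expect a significant obstacle.
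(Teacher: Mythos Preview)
Your proof is correct and follows essentially the same approach as the paper: identify a row $u_{k^\star}$ with $u_{k^\star}^\T d\neq 0$, invoke Claim~\ref{delta properties:neighboring vertices} of Lemma~\ref{lemma:delta properties} to get $|u_{k^\star}^\T d|\ge\delta\|d\|$, defer the single coordinate $\lambda_{k^\star}$, and union-bound over edges. The only cosmetic differences are that the paper justifies the existence of $k^\star$ via linear independence of $u_i$ from the $n-1$ edge-defining rows (rather than the basis argument) and counts edges directly as $\binom{m}{n-1}\le m^n$ (rather than via vertices and degree).
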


\begin{proof}
Let~$z_1$ and~$z_2$ be two neighbors of~$P$. Let $\Delta_z = z_2-z_1$. Because the claim we want to show is invariant under scaling, we can assume without loss of generality that $\|\Delta_z\| = 1$. There are $n-1$ indices $i_1, \ldots, i_{n-1} \in [m]$ such that $a_{i_k}^\T z_1 = b_{i_k} = a_{i_k}^\T z_2$. Recall that $w_1 = -[u_1, \ldots, u_n] \cdot \lambda$, where $\lambda = (\lambda_1, \ldots, \lambda_n)$ is drawn uniformly at random from $(0, 1]^n$. There must be an index~$i$ such that $a_{i_1}, \ldots, a_{i_{n-1}}, u_i$ are linearly independent. Hence, $\kappa \DEF u_i^\T \Delta_z \neq 0$ and, thus, $|\kappa| \geq \delta$ due to Lemma~\ref{lemma:delta properties}, Claim~\ref{delta properties:neighboring vertices}.

We apply the principle of deferred decisions and assume that all~$\lambda_j$ for $j \neq i$ are already drawn. Then
\[
  w_1^\T \Delta_z
  = -\sum_{j=1}^n \lambda_j \cdot u_j^\T \Delta_z
  = \underbrace{-\sum_{j \neq i} \lambda_j \cdot u_j^\T \Delta_z}_{\FED \gamma} - \lambda_i \cdot \kappa \DOT
\]
Thus,
\begin{align*}
  |w_1^\T \Delta_z| \leq \e
  &\iff w_1^\T \Delta_z \in [-\e, \e]
  \iff \lambda_i \cdot \kappa \in [\gamma - \e, \gamma + \e] \cr
  &\iff \lambda_i \in \left[ \frac{\gamma}{\kappa} - \frac{\e}{|\kappa|}, \frac{\gamma}{\kappa} + \frac{\e}{|\kappa|} \right] \DOT
\end{align*}
The probability for the latter event is bounded by the length of the interval, i.e., by $\frac{2\e}{|\kappa|} \leq \frac{2\e}{\delta}$. Since we have to consider at most $\binom{m}{n-1} \leq m^n$ pairs of neighbors $(z_1, z_2)$, applying a union bound yields the additional factor of~$m^n$.
\end{proof}

\begin{lemma}
\label{lemma:failure probability II}
The probability of event~$\F_\e$ tends to~$0$ for $\e \to 0$.
\end{lemma}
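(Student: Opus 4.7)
The plan is to write $\F_\e$ as a finite union of events, one per ordered triple $(z_1,z_2,z_3)$ of pairwise distinct vertices of~$P$, and to show each such event has probability tending to~$0$ as $\e\to 0$. Since~$P$ has at most $\binom{m}{n}\leq m^n$ vertices, there are only finitely many such triples, so it suffices to bound one of them and apply a union bound.

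Fix a triple and set $\Delta_1 \DEF z_2-z_1$ and $\Delta_2 \DEF z_3-z_2$. By Lemma~\ref{lemma:failure probability I}, both $w_1^\T\Delta_1$ and $w_1^\T\Delta_2$ are nonzero with probability one, so we may clear denominators in Definition~\ref{definition:failure event}: the event becomes
\[
  |g(\lambda,\mu)| \leq \e \cdot |w_1^\T\Delta_1| \cdot |w_1^\T\Delta_2|\COMMA
\]
where $g(\lambda,\mu) \FED (w_2^\T\Delta_1)(w_1^\T\Delta_2) - (w_2^\T\Delta_2)(w_1^\T\Delta_1)$. Substituting $w_1 = -U\lambda$ and $w_2 = V\mu$ with the invertible matrices $U \DEF [\N(u_1),\ldots,\N(u_n)]$ and $V\DEF[\N(v_1),\ldots,\N(v_n)]$ from Algorithm~\ref{algorithm:SV}, one checks that $g$ is the bilinear form $g(\lambda,\mu)=\lambda^\T M\mu$ with $M\DEF U^\T(\Delta_1\Delta_2^\T-\Delta_2\Delta_1^\T)V$.

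The heart of the proof---and the step I expect to be the main obstacle---is the claim $M\neq 0$. Invertibility of $U$ and $V$ reduces this to $\Delta_1\Delta_2^\T\neq\Delta_2\Delta_1^\T$, which, by comparing the column spaces of these rank-one matrices, is in turn equivalent to $\Delta_1$ and $\Delta_2$ being linearly independent. If instead $\Delta_2 = t\Delta_1$ for some $t\in\RR$, then a short case analysis on the sign and magnitude of~$t$ shows that one of $z_1,z_2,z_3$ would become a proper convex combination of the other two, contradicting the fact that all three are vertices of~$P$.

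With $M\neq 0$, the polynomial $g$ in $(\lambda,\mu)\in(0,1]^{2n}$ is not identically zero, so its zero set is a Lebesgue null set and $\Pr{g(\lambda,\mu)=0}=0$. Using $|w_1^\T\Delta_i|\leq\|w_1\|\cdot\|\Delta_i\|\leq n\|\Delta_i\|$, the event in question is contained in $\SET{|g(\lambda,\mu)|\leq C\e}$ with $C\DEF n^2\|\Delta_1\|\|\Delta_2\|$. These sets decrease to $\SET{g=0}$ as $\e\to 0$, and continuity of probability from above gives $\Pr{|g(\lambda,\mu)|\leq C\e}\to 0$, which (together with the union bound over triples) yields the claim.
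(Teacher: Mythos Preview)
Your proof is correct and takes a genuinely different route from the paper's. The paper fixes $w_1$ by deferred decisions, then exploits the combinatorics of the two adjacent edges at $z_2$: it selects the unique tight constraint $a_{i_{n-1}}$ that distinguishes $z_3$ from the edge $z_1z_2$, builds the basis $M=[a_{i_1},\ldots,a_{i_{n-2}},\Delta_z,a_{i_{n-1}}]$, and shows that in the coordinates $(Y_1,\ldots,Y_{n-1},Z)=M^{-1}w_2$ the slope of the first edge depends only on $Y_{n-1}$ while the slope of the second depends also on~$Z$. The slope-difference condition then forces $Z$ into an interval of length $O(\e)$, and Theorem~\ref{theorem.Prob:enough randomness} converts this into an explicit bound of the form $\gamma\cdot\e$ for each triple.

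You instead clear denominators to expose the bilinear form $g(\lambda,\mu)=\lambda^\T M\mu$, reduce $M\neq 0$ to the linear independence of the two edge directions (a clean geometric fact: three distinct collinear points of~$P$ cannot all be extreme, so your case analysis on~$t$ goes through), and finish by continuity of measure, $\Pr{|g|\leq C\e}\downarrow\Pr{g=0}=0$. Your argument is more elementary---it never invokes Theorem~\ref{theorem.Prob:enough randomness} and uses nothing about the edge structure beyond Lemma~\ref{lemma:failure probability I} and the extremality of vertices---at the price of giving only the qualitative limit rather than a rate in~$\e$. Since the lemma asserts only the limit, this suffices; the paper's quantitative bound is not used elsewhere either.
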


\begin{proof}
Let $z_1, z_2, z_3$ be pairwise distinct vertices of~$P$ such that~$z_1$ and~$z_3$ are neighbors of~$z_2$ and let $\Delta_z \DEF z_2-z_1$ and  $\Delta'_z \DEF z_3-z_2$. We assume that $\|\Delta_z\| = \|\Delta'_z\| = 1$. This entails no loss of generality as the fractions in Definition~\ref{definition:failure event} are invariant under scaling. Let $i_1, \ldots, i_{n-1} \in [m]$ be the $n-1$ indices for which $a_{i_k}^\T z_1 = b_{i_k} = a_{i_k}^\T z_2$. The rows $a_{i_1}, \ldots, a_{i_{n-1}}$ are linearly independent because~$P$ is non-degenerate. Since $z_1, z_2, z_3$ are distinct vertices of~$P$ and since~$z_1$ and~$z_3$ are neighbors of~$z_2$, there is exactly one index~$i_\ell$ for which $a_{i_\ell}^\T z_3 < b_{i_\ell}$, i.e., $a_{i_\ell}^\T \Delta'_z \neq 0$. Otherwise, $z_1, z_2, z_3$ would be collinear which would contradict the fact that they are distinct vertices of~$P$. Without loss of generality assume that $\ell = n-1$. Since $a_{i_k}^\T \Delta_z = 0$ for each $k \in [n-1]$, the vectors $a_{i_1}, \ldots, a_{i_{n-1}}, \Delta_z$ are linearly independent.

We apply the principle of deferred decisions and assume that~$w_1$ is already fixed. Thus, $w_1^\T \Delta_z$ and $w_1^\T \Delta'_z$ are fixed as well. Moreover, we assume that $w_1^\T \Delta_z \neq 0$ and $w_1^\T \Delta'_z \neq 0$ since this happens almost surely due to Lemma~\ref{lemma:failure probability I}. Now consider the matrix $M = [a_{i_1}, \ldots, a_{i_{n-2}}, \Delta_z, a_{i_{n-1}}]$ and the random vector
$
  (Y_1, \ldots, Y_{n-1}, Z)^\T
  = M^{-1} \cdot w_2
  = M^{-1} \cdot [v_1, \ldots, v_n] \cdot \mu
$.
For fixed values $y_1, \ldots, y_{n-1}$ let us consider all realizations of~$\mu$ for which $(Y_1, \ldots, Y_{n-1}) = (y_1, \ldots, y_{n-1})$. Then
\begin{align*}
  w_2^\T \Delta_z
  &= \big( M \cdot (y_1, \ldots, y_{n-1}, Z)^\T \big)^\T \Delta_z \cr
  &= \sum_{k=1}^{n-2} y_k \cdot a_{i_k}^\T \Delta_z + y_{n-1} \cdot \Delta_z^\T \Delta_z + Z \cdot a_{i_{n-1}}^\T \Delta_z \cr
  &= y_{n-1} \COMMA
\end{align*}
i.e., the value of $w_2^\T \Delta_z$ does not depend on the outcome of~$Z$ since~$\Delta_z$ is orthogonal to all~$a_{i_k}$. For~$\Delta'_z$ we obtain
\begin{align*}
  w_2^\T \Delta'_z
  &= \big( M \cdot (y_1, \ldots, y_{n-1}, Z)^\T \big)^\T \Delta'_z \cr
  &= \sum_{k=1}^{n-2} y_k \cdot a_{i_k}^\T \Delta'_z + y_{n-1} \cdot \Delta_z^\T \Delta'_z + Z \cdot a_{i_{n-1}}^\T \Delta'_z \cr
  &= \underbrace{y_{n-1} \cdot \Delta_z^\T \Delta'_z}_{\FED \kappa} + Z \cdot a_{i_{n-1}}^\T \Delta'_z
\end{align*}
as~$\Delta'_z$ is orthogonal to all~$a_{i_k}$ except for $k = \ell = n-1$. The chain of equivalences
\begin{align*}
  &\left| \frac{w_2^\T \Delta_z}{w_1^\T \Delta_z} - \frac{w_2^\T \Delta'_z}{w_1^\T \Delta'_z} \right| \leq \e
  \iff \frac{w_2^\T \Delta'_z}{w_1^\T \Delta'_z} \in \left[ \frac{w_2^\T \Delta_z}{w_1^\T \Delta_z} - \e, \frac{w_2^\T \Delta_z}{w_1^\T \Delta_z} + \e \right] \cr
  &\iff w_2^\T \Delta'_z \in \left[ \frac{w_2^\T \Delta_z}{w_1^\T \Delta_z} \cdot w_1^\T \Delta'_z - \e \cdot |w_1^\T \Delta'_z|, \frac{w_2^\T \Delta_z}{w_1^\T \Delta_z} \cdot w_1^\T \Delta'_z + \e \cdot |w_1^\T \Delta'_z| \right] \cr
  &\iff Z \cdot a_{i_{n-1}}^\T \Delta'_z \in \left[ \frac{w_2^\T \Delta_z}{w_1^\T \Delta_z} \cdot w_1^\T \Delta'_z - \kappa - \e \cdot |w_1^\T \Delta'_z|, \frac{w_2^\T \Delta_z}{w_1^\T \Delta_z} \cdot w_1^\T \Delta'_z - \kappa + \e \cdot |w_1^\T \Delta'_z| \right]
\end{align*}
implies, that for event~$\F_\e$ to occur~$Z$ must fall into an interval $I = I(y_1, \ldots, y_{n-1})$ of length $2\e \cdot \frac{|w_1^\T \Delta'_z|}{|a_{i_{n-1}}^\T \Delta'_z|}$. The probability of this is bounded from above by
\[
  \frac{2n \cdot 2\e \cdot \frac{|w_1^\T \Delta'_z|}{|a_{i_{n-1}}^\T \Delta'_z|}}{\delta(r_1, \ldots, r_n) \cdot \min_{k \in [n]} \|r_k\|}
  = \underbrace{\frac{4n \cdot |w_1^\T \Delta'_z|}{\delta(r_1, \ldots, r_n) \cdot \min_{k \in [n]} \|r_k\| \cdot |a_{i_{n-1}}^\T \Delta'_z|}}_{\FED \gamma} \cdot \e \COMMA
\]
where $[r_1, \ldots, r_n] = M^{-1} \cdot [v_1, \ldots, v_n]$. This is due to $(Y_1, \ldots, Y_{n-1}, Z)^\T = [r_1, \ldots, r_n] \cdot \mu$ and Theorem~\ref{theorem.Prob:enough randomness}. Since the vectors $r_1, \ldots, r_n$ are linearly independent, we have $\delta(r_1, \ldots, r_n) > 0$ and $\min_{k \in [n]} \|r_k\| > 0$. Furthermore, $|a_{i_{n-1}}^\T \Delta'_z| > 0$ since~$i_{n-1}$ is the constraint which is not tight for~$z_3$, but for~$z_2$. Hence, $\gamma < \infty$, and thus $\Pr{\left| \frac{w_2^\T \Delta_z}{w_1^\T \Delta_z} - \frac{w_2^\T \Delta'_z}{w_1^\T \Delta'_z} \right| \leq \e} \to 0$ for $\e \to 0$.

As there are at most $m^{3n}$ triples $(z_1, z_2, z_3)$ we have to consider, the claim follows by applying a union bound.
\end{proof}

Let $p \neq \pi(x_2)$ be a vertex of~$R$. We call the slope~$s$ of the edge incident to~$p$ to the right of~$p$ \emph{the slope of~$p$}. As a convention, we set the slope of $\pi(x_2)$ to~$0$ which is smaller than the slope of any other vertex~$p$ of~$R$.

\begin{figure}
  \begin{center}
    \includegraphics[width=0.3\textwidth]{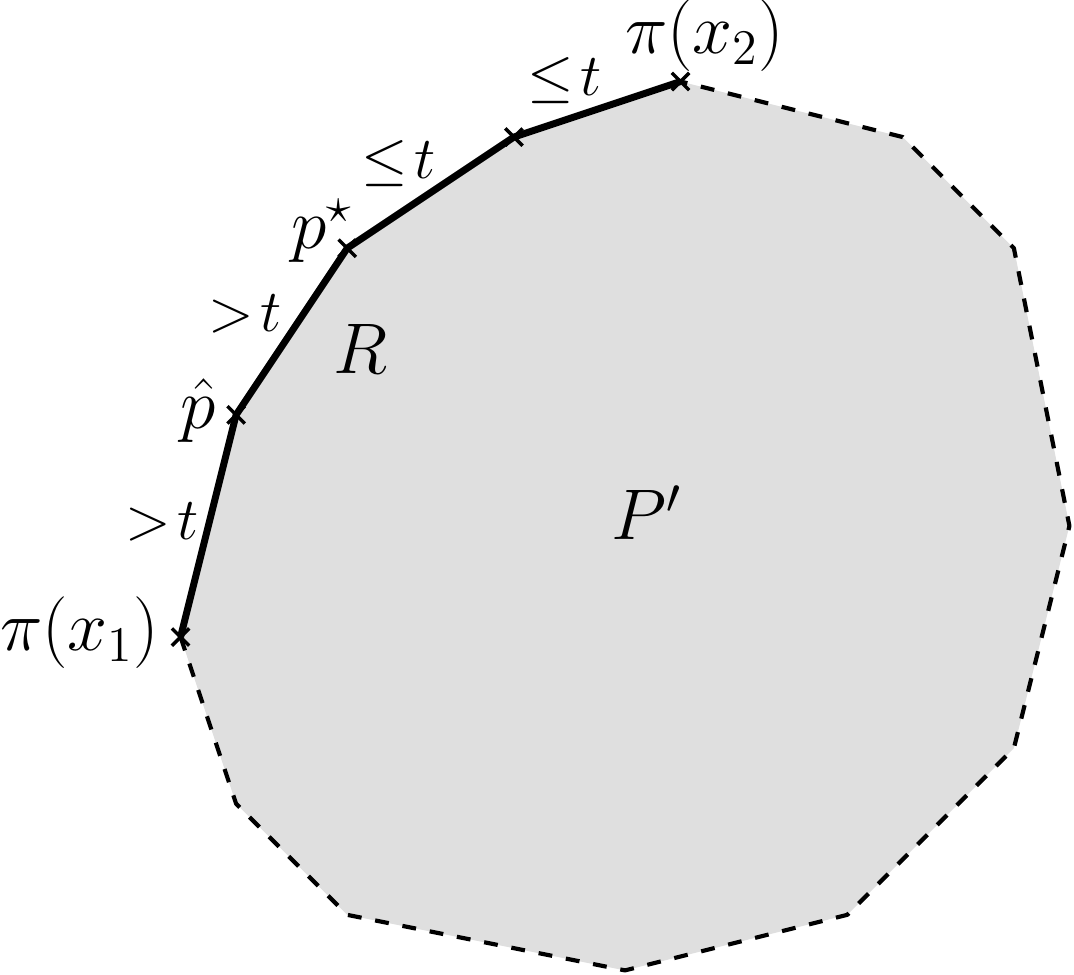}
  \end{center}
  \caption{Slopes of the vertices of~$R$}
  \label{fig:slopes}
\end{figure}

Let $t \geq 0$ be an arbitrary real, let~$\hat{p}$ be the right-most vertex of~$R$ whose slope is larger than~$t$, and let~$p^\star$ be the right neighbor of~$\hat{p}$ (see Figure~\ref{fig:slopes}). Let~$\hat{x}$ and~$x^\star$ be the neighboring vertices of~$P$ with $\pi(\hat{x}) = \hat{p}$ and $\pi(x^\star) = p^\star$. Now let $i = i(x^\star, \hat{x}) \in [m]$ be the index for which $a_i^\T x^\star = b_i$ and for which~$\hat{x}$ is the (unique) neighbor~$x$ of~$x^\star$ for which $a_i^\T x < b_i$. This index is unique due to the non-degeneracy of the polytope~$P$. For an arbitrary real $\gamma \geq 0$ we consider the vector $\tilde{w}_2 = w_2 + \gamma \cdot a_i$.

\begin{lemma}
\label{lemma:reconstruct}
Let $\tilde{\pi} = \pi_{w_1, \tilde{w}_2}$ and let $\tilde{R} = R_{w_1, \tilde{w}_2}$ be the path from $\tilde{\pi}(x_1)$ to $\tilde{\pi}(x_2)$ in the projection $\tilde{P}' = P'_{w_1, \tilde{w}_2}$ of polytope~$P$. Furthermore, let~$\tilde{p}^\star$ be the left-most vertex of~$\tilde{R}$ whose slope does not exceed~$t$. Then,
$
  \tilde{p}^\star = \tilde{\pi}(x^\star)
$.
\end{lemma}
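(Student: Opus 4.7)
The plan is to identify $\tilde{\pi}(x^\star)$ as $\tilde{p}^\star$ by verifying three properties on $\tilde{R}$: $\tilde{\pi}(x^\star)$ lies on $\tilde{R}$, its outgoing slope is at most~$t$, and its incoming slope is strictly greater than~$t$. Since the slopes along $\tilde{R}$ decrease monotonically, these facts together identify $\tilde{\pi}(x^\star)$ as the leftmost vertex of $\tilde{R}$ whose slope does not exceed~$t$, which is exactly the claim.

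The key computation is the slope transformation: for any edge $(y_1, y_2)$ of~$P$ with $w_1^\T y_1 \neq w_1^\T y_2$,
\[
  \tilde{s}(y_1, y_2) = s(y_1, y_2) + \gamma \cdot \frac{a_i^\T(y_2 - y_1)}{w_1^\T(y_2 - y_1)} \COMMA
\]
so the slope is preserved whenever both endpoints lie on $H_i \DEF \SET{x \in \RR^n \WHERE a_i^\T x = b_i}$. By non-degeneracy, $x^\star$ has exactly $n$ polytope neighbors: $\hat{x}$ is the unique one off~$H_i$ and the remaining $n-1$ lie on~$H_i$. Let $s_{\mathrm{in}} > t$ and $s_{\mathrm{out}} \leq t$ denote the incoming and outgoing slopes at $\pi(x^\star)$ on~$R$, and let $y$ be the successor of $x^\star$ on~$R$; then $y \in H_i$ (and the case $x^\star = x_2$ is covered by the convention that $\tilde{\pi}(x_2)$ has slope~$0$). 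Consequently the edge $(x^\star, y)$ has unchanged slope $s_{\mathrm{out}} \leq t$ in~$\tilde{P}'$, while the edge $(\hat{x}, x^\star)$ gains a positive amount (both $b_i - a_i^\T \hat{x}$ and $w_1^\T(x^\star - \hat{x})$ are positive) and thus acquires a slope strictly greater than $s_{\mathrm{in}} > t$. Moreover, $\tilde{\pi}(x^\star)$ lies on the upper boundary of~$\tilde{P}'$: for any $x \in P$ with $w_1^\T x = w_1^\T x^\star$, the inequality $\tilde{w}_2^\T x = w_2^\T x + \gamma a_i^\T x \leq w_2^\T x^\star + \gamma b_i = \tilde{w}_2^\T x^\star$ uses that $\pi(x^\star)$ is topmost in its $\xi$-slice of~$P'$ and that $x^\star \in H_i$.

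The main obstacle is to identify the neighbors of $\tilde{\pi}(x^\star)$ on $\tilde{R}$ and read off the incoming and outgoing slopes there, since the geometry at this vertex has changed. The crucial asymmetry is that every polytope neighbor of $x^\star$ with $\xi$-coordinate strictly greater than $w_1^\T x^\star$ lies in~$H_i$, because $\hat{x}$ is the unique non-$H_i$ neighbor and sits strictly to the left. Adding $\gamma a_i$ to $w_2$ shifts the $\eta$-coordinate of every $H_i$-point by the same amount $\gamma b_i$, so the positions of $\tilde{\pi}(x^\star)$ and all of its right polytope neighbors stay in the same relative configuration as in~$P'$; concavity of the upper boundary of~$P'$ at $\pi(x^\star)$ then forces the right-adjacent vertex on~$\tilde{P}'$ to be $\tilde{\pi}(y)$, with slope still equal to~$s_{\mathrm{out}}$. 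For the left, slopes from $\tilde{\pi}(x^\star)$ to all left $H_i$-neighbors are unchanged and, by concavity applied in~$P'$, at least $s_{\mathrm{in}} > t$, while the slope to $\tilde{\pi}(\hat{x})$ also strictly exceeds $s_{\mathrm{in}}$ by the computation above; whichever left polytope neighbor minimizes this slope becomes the left-adjacent vertex on~$\tilde{P}'$, yielding an incoming slope at $\tilde{\pi}(x^\star)$ greater than~$t$ and hence $\tilde{p}^\star = \tilde{\pi}(x^\star)$.
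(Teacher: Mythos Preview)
Your argument is correct and arrives at the same conclusion, but the route differs from the paper's. The paper introduces the auxiliary linear form $\bar{w}_2(x)=\tilde{w}_2^\T x-\gamma b_i$, observes that $\bar{\pi}(x)$ lies weakly below $\pi(x)$ for every $x\in P$ with equality at $x^\star$, and hence that the entire path $\bar{R}$ sits below $R$ while touching it at $p^\star=\pi(x^\star)$. A one-line convexity argument at the touching point then gives that the left slope of $\bar{R}$ at $p^\star$ is at least $s_{\mathrm{in}}>t$ and the right slope at most $s_{\mathrm{out}}\le t$; since $\bar{R}$ and $\tilde{R}$ differ only by a vertical shift, the conclusion follows. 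Your proof instead works locally at $x^\star$: you classify its $n$ polytope neighbours (only $\hat{x}$ off $H_i$, necessarily to the left), compute how the slope to each transforms under $w_2\mapsto\tilde{w}_2$, and read off the incoming and outgoing slopes at $\tilde{\pi}(x^\star)$ by selecting the extremal neighbour on each side. Both arguments rest on the same inequality $\tilde{w}_2^\T x\le w_2^\T x+\gamma b_i$ with equality on $H_i$; the paper leverages it globally to compare two concave graphs, while you leverage it edge-by-edge. The paper's version is shorter and never needs to name the adjacent edges or invoke the shadow-vertex edge correspondence; your version yields the extra information that the outgoing edge at $\tilde{\pi}(x^\star)$ is literally the same edge $(x^\star,y)$ as in $R$. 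One small imprecision: for $\gamma=0$ the slope to $\hat{x}$ is unchanged rather than ``strictly'' increased, but your conclusion (incoming slope $>t$) is unaffected.
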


Let us reformulate the statement of Lemma~\ref{lemma:reconstruct} as follows: The vertex~$\tilde{p}^\star$ is defined for the path~$\tilde{R}$ of polygon~$\tilde{P}'$ with the same rules as used to define the vertex~$p^\star$ of the original path~$R$ of polygon~$P'$. Even though~$R$ and~$\tilde{R}$ can be very different in shape, both vertices, $p^\star$ and~$\tilde{p}^\star$, correspond to the same solution~$x^\star$ in the polytope~$P$, that is, $p^\star = \pi(x^\star)$ and $\tilde{p}^\star = \tilde{\pi}(x^\star)$. Let us remark that Lemma~\ref{lemma:reconstruct} is a significant generalization of Lemma~4.3 of~\cite{BrunschCMR13}.

\begin{proof}
We consider a linear auxiliary function $\bar{w}_2 \colon \RR^n \to \RR$, given by $\bar{w}_2(x) = \tilde{w}_2^\T x - \gamma \cdot b_i$. The paths $\bar{R} = R_{w_1, \bar{w}_2}$ and~$\tilde{R}$ are identical except for a shift by $-\gamma \cdot b_i$ in the second coordinate because for $\bar{\pi} = \pi_{w_1, \bar{w}_2}$ we obtain
\vspace{-0.5em}\[
  \bar{\pi}(x)
  = (w_1^\T x, \tilde{w}_2^\T x - \gamma \cdot b_i)
  = (w_1^\T x, \tilde{w}_2^\T x) - (0,  \gamma \cdot b_i)
  = \tilde{\pi}(x) - (0,  \gamma \cdot b_i) \vspace{-0.5em}
\]
for all $x \in \RR^n$. Consequently, the slopes of~$\bar{R}$ and~$\tilde{R}$ are exactly the same (see Figure~\ref{fig:reconstruct shift}).

\begin{figure}
  \begin{center}
    \begin{subfigure}{0.4\textwidth}
      \begin{center}
        \includegraphics[page=1, width=0.6\textwidth]{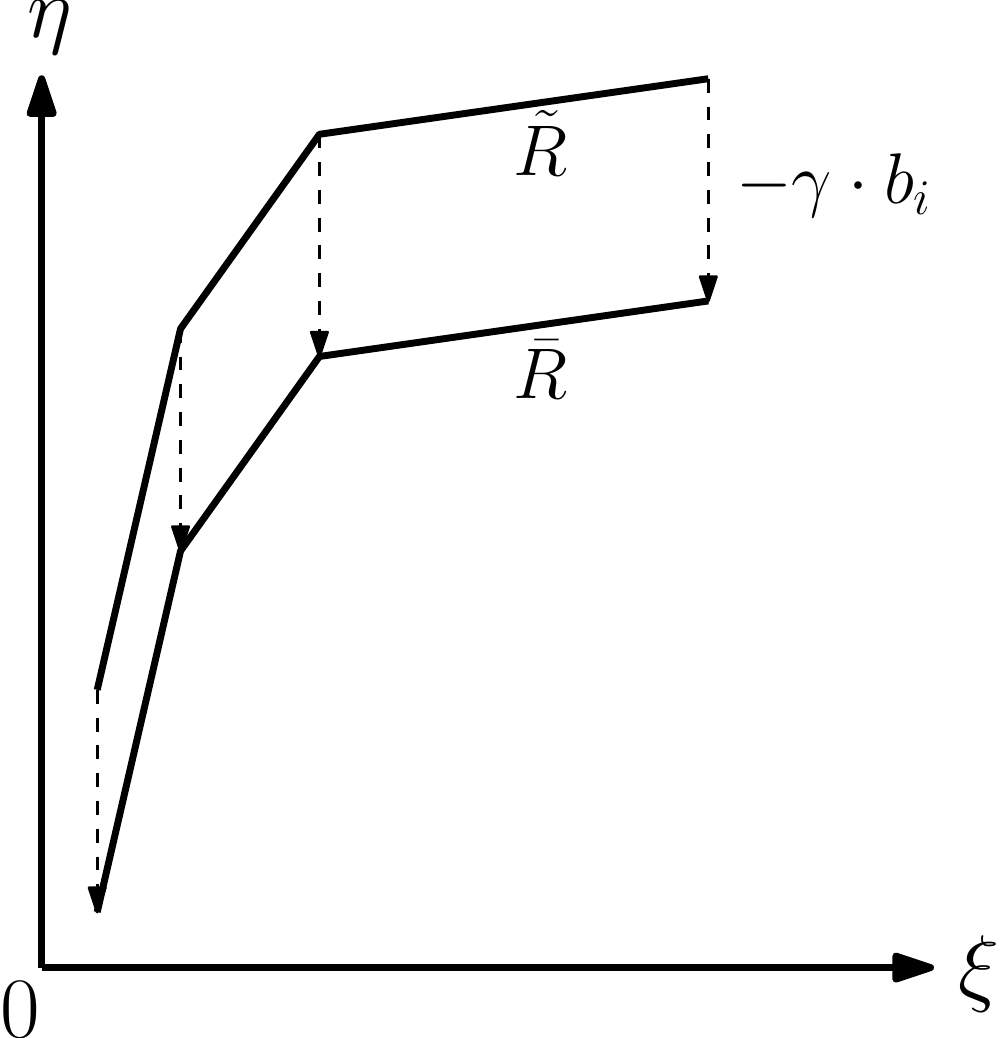}
        \caption{Relation between $\bar{R}$ and $\tilde{R}$}
        \label{fig:reconstruct shift}
      \end{center}
    \end{subfigure}
    \hspace{10ex}
    \begin{subfigure}{0.4\textwidth}
      \begin{center}
        \includegraphics[page=2, width=0.6\textwidth]{imgReconstruct.pdf}
        \caption{Relation between $\bar{R}$ an $R$}
        \label{fig:reconstruct below}
      \end{center}
    \end{subfigure}
  \end{center}
  \caption{Relations between $R$, $\tilde{R}$, and $\bar{R}$}
\end{figure}

Let $x \in P$ be an arbitrary point from the polytope~$P$. Then,
$
  \tilde{w}_2^\T x
  = w_2^\T x + \gamma \cdot a_i^\T x
  \leq w_2^\T x + \gamma \cdot b_i
$.
The inequality is due to $\gamma \geq 0$ and $a_i^\T x \leq b_i$ for all $x \in P$. Equality holds, among others, for $x = x^\star$ due to the choice of~$a_i$. Hence, for all points $x \in P$ the two-dimensional points $\pi(x)$ and $\bar{\pi}(x)$ agree in the first coordinate while the second coordinate of $\pi(x)$ is at least the second coordinate of $\bar{\pi}(x)$ as $\bar{w}_2(x) = \tilde{w}_2^\T x - \gamma \cdot b_i \leq w_2^\T x$. Additionally, we have $\pi(x^\star) = \bar{\pi}(x^\star)$. Thus, path~$\bar{R}$ is below path~$R$ but they meet at point $p^\star = \pi(x^\star)$. Hence, the slope of~$\bar{R}$ to the left (right) of~$p^\star$ is at least (at most) the slope of~$R$ to the left (right) of~$p^\star$ which is greater than (at most)~$t$ (see Figure~\ref{fig:reconstruct below}). Consequently, $p^\star$ is the left-most vertex of~$\bar{R}$ whose slope does not exceed~$t$. Since~$\bar{R}$ and~$\tilde{R}$ are identical up to a shift of $-(0, \gamma \cdot b_i)$, $\tilde{\pi}(x^\star)$ is the left-most vertex of~$\tilde{R}$ whose slope does not exceed~$t$, i.e., $\tilde{\pi}(x^\star) = \tilde{p}^\star$.
\end{proof}

Lemma~\ref{lemma:reconstruct} holds for any vector~$\tilde{w}_2$ on the ray $\vec{r} = \SET{ w_2 + \gamma \cdot a_i \WHERE \gamma \geq 0}$. As $\|w_2\| \leq n$ (see Section~\ref{sec:our algorithm}), we have $w_2 \in [-n,n]^n$. Hence, ray~$\vec{r}$ intersects the boundary of $[-n,n]^n$ in a unique point~$z$. We choose $\tilde{w}_2 = \tilde{w}_2(w_2, i) \DEF z$ and obtain the following result.

\begin{corollary}
\label{corollary:reconstruct}
Let $\tilde{\pi} = \pi_{w_1, \tilde{w}_2(w_2, i)}$ and let~$\tilde{p}^\star$ be the left-most vertex of path $\tilde{R} = R_{w_1, \tilde{w}_2(w_2, i)}$ whose slope does not exceed~$t$. Then,
$
	\tilde{p}^\star = \tilde{\pi}(x^\star)
$.
\end{corollary}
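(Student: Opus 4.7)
The plan is to derive the corollary as an immediate specialization of Lemma~\ref{lemma:reconstruct}. Recall that the lemma holds for \emph{every} $\gamma\ge 0$ and the associated vector $\tilde{w}_2 = w_2 + \gamma\cdot a_i$. The corollary simply pins down one canonical choice of $\gamma$, namely the value $\gamma^\star\ge 0$ for which $w_2+\gamma^\star\cdot a_i$ lies on the boundary of the box $[-n,n]^n$. Once $\gamma^\star$ is identified, applying Lemma~\ref{lemma:reconstruct} with $\tilde{w}_2 = w_2+\gamma^\star\cdot a_i$ directly produces the stated conclusion $\tilde{p}^\star = \tilde{\pi}(x^\star)$.

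Thus the only work to be done is to justify that the choice $\tilde{w}_2(w_2,i)=z$ is well defined, i.e.\ that the ray $\vec{r}=\{w_2+\gamma\cdot a_i : \gamma\ge 0\}$ hits the boundary of $[-n,n]^n$ in a unique point. First I would note that $w_2\in[-n,n]^n$ by the bound $\|w_2\|\le n$ established in Section~\ref{sec:our algorithm}. Next, since $\|a_i\|=1$ (by the normalization assumed at the beginning of Section~\ref{sec:analysis}), the vector $a_i$ is nonzero, so the map $\gamma\mapsto w_2+\gamma\cdot a_i$ is a non-constant affine parametrization of a ray. The box $[-n,n]^n$ is bounded, so the set $\{\gamma\ge 0 : w_2+\gamma\cdot a_i\in[-n,n]^n\}$ is a bounded closed interval $[0,\gamma^\star]$; its right endpoint $\gamma^\star$ yields the unique intersection point $z=w_2+\gamma^\star\cdot a_i$ of $\vec{r}$ with the boundary of $[-n,n]^n$.

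Having established existence and uniqueness of $\gamma^\star$, the corollary follows by invoking Lemma~\ref{lemma:reconstruct} verbatim with this value of $\gamma$: the hypothesis $\gamma\ge 0$ of the lemma is satisfied, and the conclusion $\tilde{p}^\star=\tilde{\pi}(x^\star)$ is exactly what we need. No additional structural argument is required, since all the geometric content, in particular the comparison of the paths $R$, $\bar{R}$, and $\tilde{R}$ and the identification of $p^\star$ with $\pi(x^\star)$, has already been carried out in the proof of the lemma.

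I do not foresee any genuine obstacle here; the statement is essentially bookkeeping, fixing a concrete representative on the ray $\vec{r}$ so that the later probabilistic arguments (in the spirit of the deferred-decisions strategy outlined in Section~\ref{sec:Outline}) can be anchored to a bounded region. The one step worth stating carefully is the uniqueness of $z$, because if $w_2$ happened to lie on a face of $[-n,n]^n$ with $a_i$ tangent to or pointing into that face, one could momentarily worry about an entire segment of $\vec{r}$ lying on the boundary; however, since $\vec{r}$ is a ray and $[-n,n]^n$ is a bounded convex set, the intersection of $\vec{r}$ with the boundary still consists of the two endpoints of the closed interval $\vec{r}\cap[-n,n]^n$, of which only the far endpoint $z$ is singled out by the definition.
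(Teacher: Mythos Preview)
Your proposal is correct and matches the paper's approach exactly: the paper presents the corollary as an immediate consequence of Lemma~\ref{lemma:reconstruct} after a one-sentence remark that $w_2\in[-n,n]^n$ and that the ray meets the boundary of $[-n,n]^n$ in a unique point~$z$, then simply defines $\tilde{w}_2(w_2,i):=z$. You supply more detail than the paper does (which gives no formal proof at all), but the logical content is identical.

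One small quibble: your final paragraph's resolution of the uniqueness worry is not quite right as stated---if $w_2$ lies on a face and $a_i$ is parallel to that face, the intersection of $\vec{r}$ with the \emph{boundary} of the box is a whole segment, not just two endpoints. The clean fix (which is implicitly what both you and the paper mean) is to define $\gamma^\star=\max\{\gamma\ge 0: w_2+\gamma a_i\in[-n,n]^n\}$ directly; this maximum exists and is unique regardless, and $z=w_2+\gamma^\star a_i$ automatically lies on the boundary.
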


Note, that Corollary~\ref{corollary:reconstruct} only holds for the right choice of index $i = i(x^\star, \hat{x})$. The vector $\tilde{w}_2(w_2, i)$ is defined for any vector $w_2 \in [-n, n]^n$ and any index $i \in [m]$. In the remainder, index~$i$ is an arbitrary index from~$[m]$.

We can now define the following event that is parameterized in~$i$, $t$, and a real $\e > 0$ and that depends on~$w_1$ and~$w_2$.

\begin{definition}
\label{definition:event E}
For an index $i \in [m]$ and a real $t \geq 0$ let~$\tilde{p}^\star$ be the left-most vertex of $\tilde{R} = R_{w_1, \tilde{w}_2(w_2, i)}$ whose slope does not exceed~$t$ and let~$y^\star$ be the corresponding vertex of~$P$. For a real $\e > 0$ we denote by~$\E_{i, t, \e}$ the event that the conditions
\begin{itemize}

  \item[$\bullet$] $a_i^\T y^\star = b_i$ and

  \item[$\bullet$] $\frac{w_2^\T (\hat{y} - y^\star)}{w_1^\T (\hat{y} - y^\star)} \in (t, t+\e]$, where~$\hat{y}$ is the neighbor~$y$ of~$y^\star$ for which $a_i^\T y < b_i$,

\end{itemize}
are met. Note, that the vertex~$\hat{y}$ always exists and that it is unique since the polytope~$P$ is non-degenerate.
\end{definition}

Let us remark that the vertices~$y^\star$ and~$\hat{y}$, which depend on the index~$i$, equal~$x^\star$ and~$\hat{x}$ if we choose $i = i(x^\star, \hat{x})$. For other choices of~$i$, this is, in general, not the case.

Observe that all possible realizations of~$w_2$ from the line $L \DEF \SET{ w_2 + x \cdot a_i \WHERE x \in \RR }$ are mapped to the same vector $\tilde{w}_2(w_2, i)$. Consequently, if~$w_1$ is fixed and if we only consider realizations of~$\mu$ for which $w_2 \in L$, then vertex~$\tilde{p}^\star$ and, hence, vertex~$y^\star$ from Definition~\ref{definition:event E} are already determined. However, since~$w_2$ is not completely specified, we have some randomness left for event $\E_{i, t, \e}$ to occur. This allows us to bound the probability of event $\E_{i, t, \e}$ from above (see proof of Lemma~\ref{lemma:probability bound}). The next lemma shows why this probability matters.

\begin{lemma}
\label{lemma:event covering}
For reals $t \geq 0$ and $\e > 0$ let $\A_{t, \e}$ denote the event that the path $R = R_{w_1, w_2}$ has a slope in $(t, t+\e]$. Then, $\A_{t, \e} \subseteq \bigcup_{i=1}^m \E_{i, t, \e}$.
\end{lemma}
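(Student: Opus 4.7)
The plan is to fix a realization in $\A_{t,\e}$ and exhibit an explicit index $i$ for which $\E_{i,t,\e}$ holds. So suppose the path $R$ contains an edge of slope $s_0 \in (t, t+\e]$. Using the fact (established in Section~\ref{sec:our algorithm} and referred to in the proof sketch preceding Figure~\ref{fig:slopes}) that the slopes along $R$ are positive and monotonically decreasing from $\pi(x_1)$ to $\pi(x_2)$, I would argue that the edge $e = (\hat{p}, p^\star)$ sitting immediately to the right of the right-most vertex $\hat{p}$ whose slope exceeds $t$ is a well-defined edge of $R$, and that its slope $s$ is the smallest slope of $R$ that is strictly greater than $t$. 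Since $s_0 > t$, monotonicity forces $t < s \le s_0 \le t+\e$, so the slope of $e$ itself lies in $(t, t+\e]$.

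Now I would invoke the setup used to state Corollary~\ref{corollary:reconstruct}: let $\hat{x}, x^\star$ be the vertices of $P$ with $\pi(\hat{x}) = \hat{p}$ and $\pi(x^\star) = p^\star$, and let $i \DEF i(x^\star, \hat{x}) \in [m]$ be the unique index with $a_i^\T x^\star = b_i$ and $a_i^\T \hat{x} < b_i$ (unique because $P$ is non-degenerate). Corollary~\ref{corollary:reconstruct} applied to this particular $i$ states that $\tilde{p}^\star = \tilde{\pi}(x^\star)$, where $\tilde{p}^\star$ is the leftmost vertex of $\tilde{R} = R_{w_1, \tilde{w}_2(w_2,i)}$ with slope at most $t$. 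Hence the vertex $y^\star$ of $P$ appearing in Definition~\ref{definition:event E} for this index $i$ coincides with $x^\star$, and the unique neighbor $\hat{y}$ of $y^\star$ with $a_i^\T \hat{y} < b_i$ coincides with $\hat{x}$.

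With this identification, the two conditions defining $\E_{i,t,\e}$ become a direct check. The condition $a_i^\T y^\star = b_i$ holds by the very definition of $i$. For the slope condition, a one-line computation gives
\[
  \frac{w_2^\T(\hat{y}-y^\star)}{w_1^\T(\hat{y}-y^\star)}
  = \frac{w_2^\T(\hat{x}-x^\star)}{w_1^\T(\hat{x}-x^\star)}
  = \frac{w_2^\T(x^\star-\hat{x})}{w_1^\T(x^\star-\hat{x})},
\]
and the right-hand side is exactly the slope of the edge $e = (\hat{p}, p^\star) = (\pi(\hat{x}), \pi(x^\star))$ in the plane, which we already placed in $(t, t+\e]$. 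Thus $\E_{i,t,\e}$ occurs, so $\A_{t,\e} \subseteq \bigcup_{i=1}^m \E_{i,t,\e}$.

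There is no real obstacle here: the content of the lemma is entirely a bookkeeping exercise, and all the work has been front-loaded into Corollary~\ref{corollary:reconstruct}, which guarantees that the combinatorially defined vertex $p^\star$ in the original polygon $P'$ corresponds, under the perturbed projection $\tilde{\pi}$, to precisely the vertex $\tilde{p}^\star$ picked out by the rule in Definition~\ref{definition:event E}. The only point worth double-checking when writing the formal proof is the sign/orientation of the slope, but as shown above the minus signs cancel and the identification is clean.
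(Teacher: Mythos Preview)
Your proposal is correct and follows essentially the same approach as the paper: pick the edge $(\hat{p},p^\star)$ at the threshold~$t$, set $i=i(x^\star,\hat{x})$, and use Corollary~\ref{corollary:reconstruct} to identify $y^\star=x^\star$ and $\hat{y}=\hat{x}$, then verify the two conditions of $\E_{i,t,\e}$. Your explicit monotonicity argument for why the slope of $e$ lands in $(t,t+\e]$ is slightly more detailed than the paper's one-line justification, but the logic is identical.
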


\begin{proof}
Assume that event $\A_{t, \e}$ occurs. Let~$\hat{p}$ be the right-most vertex of~$R$ whose slope exceeds~$t$, let~$p^\star$ be the right neighbor of~$\hat{p}$, and let~$\hat{x}$ and~$x^\star$ be the neighboring vertices of~$P$ for which $\pi(\hat{x}) = \hat{p}$ and $\pi(x^\star) = p^\star$, where $\pi = \pi_{w_1, w_2}$. Moreover, let $i = i(x^\star, \hat{x})$ be the index for which $a_i^\T x^\star = b_i$ but $a_i^\T \hat{x} < b_i$. We show that event~$\E_{i, t, \e}$ occurs.

Consider the left-most vertex~$\tilde{p}^\star$ of $\tilde{R} = R_{w_1, \tilde{w}_2(w_2, i)}$ whose slope does not exceed~$t$ and let~$y^\star$ be the corresponding vertex of~$P$. In accordance with Corollary~\ref{corollary:reconstruct} we obtain $y^\star = x^\star$. Hence, $a_i^\T y^\star = b_i$, i.e., the first condition of event~$\E_{i, t, \e}$ holds. Now let~$\hat{y}$ be the unique neighbor~$y$ of~$y^\star$ for which $a_i^\T y < b_i$. Since $y^\star = x^\star$, we obtain $\hat{y} = \hat{x}$. Consequently,
\[
  \frac{w_2^\T (\hat{y} - y^\star)}{w_1^\T (\hat{y} - y^\star)}
  = \frac{w_2^\T (\hat{x} - x^\star)}{w_1^\T (\hat{x} - x^\star)}
  \in (t, t+\e] \COMMA
\]
since this is the smallest slope of~$R$ that exceeds~$t$ and since there is a slope in $(t, t+\e]$ by assumption. Hence, event $\E_{i, t, \e}$ occurs since the second condition for event $\E_{i, t, \e}$ to happen holds as well.
\end{proof}

With Lemma~\ref{lemma:event covering} we can now bound the probability of event $\A_{t, \e}$.

\begin{lemma}
\label{lemma:probability bound}
For reals $t \geq 0$ and $\e > 0$ the probability of event $\A_{t, \e}$ is bounded by
$
  \Pr{\A_{t, \e}} \leq \frac{4mn^2\e}{\delta^2}
$.
\end{lemma}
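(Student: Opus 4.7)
The plan is to combine Lemma~\ref{lemma:event covering} with a union bound, so the claim reduces to showing $\Pr{\E_{i,t,\e}} \leq \frac{4n^2\e}{\delta^2}$ for each fixed $i \in [m]$. To analyze a single event $\E_{i,t,\e}$ I apply the principle of deferred decisions outlined in Section~\ref{sec:Outline}: first expose $w_1$ completely and expose $w_2$ only through the vector $\tilde{w}_2(w_2,i)$, which by Corollary~\ref{corollary:reconstruct} already fixes the vertices $y^\star$ and $\hat{y}$ appearing in Definition~\ref{definition:event E}. After this partial revelation, only a one-dimensional freedom in $w_2$ remains, namely the component along $a_i$, and this residual randomness is what I will use to bound $\Pr{\E_{i,t,\e}}$.

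Concretely, since $\|a_i\| = 1$, decompose $w_2 = w_2^\perp + \alpha \cdot a_i$ with $w_2^\perp \perp a_i$ and $\alpha = a_i^\T w_2$. The ray defining $\tilde{w}_2(w_2,i)$ depends only on $w_2^\perp$, so the conditioning freezes $w_2^\perp$ while $\alpha$ remains random. If the first condition of $\E_{i,t,\e}$, namely $a_i^\T y^\star = b_i$, fails, we are done. Otherwise, setting $\Delta \DEF \hat{y} - y^\star$ and using $w_2^\T \Delta = (w_2^\perp)^\T \Delta + \alpha \cdot (a_i^\T \Delta)$, the second condition $\frac{w_2^\T \Delta}{w_1^\T \Delta} \in (t,t+\e]$ becomes the requirement that $\alpha$ lie in a fixed interval of length $\frac{\e \cdot |w_1^\T \Delta|}{|a_i^\T \Delta|}$. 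The bound $\|w_1\| \leq n$ and, by Claim~\ref{delta properties:neighboring vertices} of Lemma~\ref{lemma:delta properties} (applicable since $a_i^\T \hat{y} < b_i = a_i^\T y^\star$ forces $a_i^\T \Delta \neq 0$), $|a_i^\T \Delta| \geq \delta \|\Delta\|$, together show that this length is at most $n\e/\delta$.

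The remaining work is to bound the conditional density of $\alpha$, and for this I will invoke Theorem~\ref{theorem.Prob:enough randomness} in essentially the same manner as in the proof of Lemma~\ref{lemma:failure probability II}. The key technical device is to pick an orthogonal matrix $O$ whose last row equals $a_i^\T$ (possible because $\|a_i\| = 1$); then $(Ow_2)_n = \alpha$, and the remaining $n-1$ coordinates of $Ow_2$ determine $w_2^\perp$. Writing $Ow_2 = R\mu$ with $R = O \cdot [\N(v_1), \ldots, \N(v_n)]$, the columns $r_k$ of $R$ are unit vectors (since $O$ preserves norms and each $\N(v_k)$ has unit norm), and by Claim~\ref{delta properties:orthogonal matrix} of Lemma~\ref{lemma:delta properties} together with scale-invariance, $\delta(r_1,\ldots,r_n) = \delta(v_1,\ldots,v_n) \geq \delta$. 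Theorem~\ref{theorem.Prob:enough randomness} then yields $\Pr{\alpha \in I \mid w_1, \tilde{w}_2} \leq \frac{2n \, |I|}{\delta \cdot 1} \leq \frac{2n^2 \e}{\delta^2}$, and summing over the $m$ choices of $i$ gives the stated bound (with a factor of $2$ to spare).

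The main obstacle is engineering the conditioning so that the remaining one-dimensional randomness in $w_2$ aligns exactly with $a_i$ and so that the vectors $r_k$ feeding into Theorem~\ref{theorem.Prob:enough randomness} inherit both unit norms and a $\delta$-parameter of at least $\delta(A)$. The orthogonal change of basis $O$ is what makes this possible, since orthogonal transformations preserve norms and, by Claim~\ref{delta properties:orthogonal matrix}, the parameter $\delta$. A secondary point that must be verified is the applicability of Claim~\ref{delta properties:neighboring vertices}: this needs $y^\star$ and $\hat{y}$ to be genuinely neighboring vertices of $P$, which follows from Definition~\ref{definition:event E} together with the non-degeneracy assumption on $P$ made at the start of Section~\ref{sec:Outline}.
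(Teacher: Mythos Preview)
Your proposal is correct and follows essentially the same route as the paper: reduce via Lemma~\ref{lemma:event covering}, condition on $w_1$ and on the component of $w_2$ orthogonal to $a_i$ through an orthogonal change of basis (your $O$ is the paper's $Q^\T$), then combine Claim~\ref{delta properties:neighboring vertices} of Lemma~\ref{lemma:delta properties} with Theorem~\ref{theorem.Prob:enough randomness} and Claim~\ref{delta properties:orthogonal matrix}. One small remark: the fact that $y^\star$ and $\hat{y}$ are already determined by $w_1$ and $\tilde{w}_2(w_2,i)$ follows directly from Definition~\ref{definition:event E} rather than from Corollary~\ref{corollary:reconstruct}, and your interval length $n\e/\delta$ is in fact half of the paper's $2n\e/\delta$ (the paper loosely enlarges $(t,t+\e]$ to $[t-\e,t+\e]$), which explains your factor of~$2$ to spare.
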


\begin{proof}
Due to Lemma~\ref{lemma:event covering} it suffices to show that
$
  \Pr{\E_{i, t, \e}}
  \leq \frac{1}{m} \cdot \frac{4mn^2\e}{\delta^2}
  = \frac{4n^2\e}{\delta^2}
$
for any index $i \in [m]$.

We apply the principle of deferred decisions and assume that vector $\lambda \in (0, 1]^n$ is not random anymore, but arbitrarily fixed. Thus, vector~$w_1$ is already fixed. Now we extend the normalized vector~$a_i$ to an orthonormal basis $\SET{ q_1, \ldots, q_{n-1}, a_i }$ of~$\RR^n$ and consider the random vector $(Y_1, \ldots, Y_{n-1}, Z)^\T = Q^\T w_2$ given by the matrix vector product of the transpose of the orthogonal matrix $Q = [q_1, \ldots, q_{n-1}, a_i]$ and the vector $w_2 = [v_1, \ldots, v_n] \cdot \mu$. For fixed values $y_1, \ldots, y_{n-1}$ let us consider all realizations of~$\mu$ such that $(Y_1, \ldots, Y_{n-1}) = (y_1, \ldots, y_{n-1})$. Then,~$w_2$ is fixed up to the ray
\vspace{-0.5em}\[
  w_2(Z)
  = Q \cdot (y_1, \ldots, y_{n-1}, Z)^\T
  = \sum_{j=1}^{n-1} y_j \cdot q_j + Z \cdot a_i
  = w + Z \cdot a_i \vspace{-0.5em}
\]
for $w = \sum_{j=1}^{n-1} y_j \cdot q_j$. All realizations of $w_2(Z)$ that are under consideration are mapped to the same value~$\tilde{w}_2$ by the function $w_2 \mapsto \tilde{w}_2(w_2, i)$, i.e., $\tilde{w}_2(w_2(Z), i) = \tilde{w}_2$ for any possible realization of~$Z$. In other words, if $w_2 = w_2(Z)$ is specified up to this ray, then the path $R_{w_1, \tilde{w}_2(w_2, i)}$ and, hence, the vectors~$y^\star$ and~$\hat{y}$ used for the definition of event $\E_{i, t, \e}$, are already determined.

Let us only consider the case that the first condition of event~$\E_{i, t, \e}$ is fulfilled. Otherwise, event~$\E_{i, t, \e}$ cannot occur. Thus, event $\E_{i, t, \e}$ occurs iff
\[
  (t, t+\e]
  \ni \frac{w_2^\T \cdot (\hat{y} - y^\star)}{w_1^\T \cdot (\hat{y} - y^\star)}
  = \underbrace{\frac{w^\T \cdot (\hat{y} - y^\star)}{w_1^\T \cdot (\hat{y} - y^\star)}}_{\FED \alpha} + Z \cdot \underbrace{\frac{a_i^\T \cdot (\hat{y} - y^\star)}{w_1^\T \cdot (\hat{y} - y^\star)}}_{\FED \beta} \DOT
\]
The next step in this proof will be to show that the inequality $|\beta| \geq \frac{\delta}{n}$ is necessary for event~$\E_{i, t, \e}$ to happen. For the sake of simplicity let us assume that $\|\hat{y} - y^\star\| = 1$ since~$\beta$ is invariant under scaling. If event~$\E_{i, t, \e}$ occurs, then $a_i^\T y^\star = b_i$, $\hat{y}$ is a neighbor of~$y^\star$, and $a_i^\T \hat{y} \neq b_i$. That is, by Lemma~\ref{lemma:delta properties}, Claim~\ref{delta properties:neighboring vertices} we obtain $|a_i^\T \cdot (\hat{y} - y^\star)| \geq \delta \cdot \|\hat{y} - y^\star\| = \delta$ and, hence,
\[
  |\beta|
  = \left| \frac{a_i^\T \cdot (\hat{y} - y^\star)}{w_1^\T \cdot (\hat{y} - y^\star)} \right|
  \geq \frac{\delta}{|w_1^\T \cdot (\hat{y} - y^\star)|}
  \geq \frac{\delta}{\|w_1\| \cdot \|\hat{y}-y^\star)\|}
  \geq \frac{\delta}{n \cdot 1} \DOT
\]
Summarizing the previous observations we can state that if event~$\E_{i, t, \e}$ occurs, then $|\beta| \geq \frac{\delta}{n}$ and $\alpha + Z \cdot \beta \in (t, t+\e] \subseteq [t-\e, t+\e]$. Hence,
\[
  Z
  \in \left[ \frac{t-\alpha}{\beta} - \frac{\e}{|\beta|}, \frac{t-\alpha}{\beta} + \frac{\e}{|\beta|} \right]
  \subseteq \left[ \frac{t-\alpha}{\beta} - \frac{\e}{\frac{\delta}{n}}, \frac{t-\alpha}{\beta} + \frac{\e}{\frac{\delta}{n}} \right]
  \FED I(y_1, \ldots, y_{n-1}) \DOT
\]
Let~$\B_{i, t, \e}$ denote the event that~$Z$ falls into the interval $I(Y_1, \ldots, Y_{n-1})$ of length $\frac{2n \e}{\delta}$. We showed that $\E_{i, t, \e} \subseteq \B_{i, t, \e}$. Consequently,
\[
  \Pr{\E_{i, t, \e}}
  \leq \Pr{\B_{i, t, \e}}
  \leq \frac{2n \cdot \frac{2n \e}{\delta}}{\delta(Q^\T v_1, \ldots, Q^\T v_n)}
  \leq \frac{4n^2\e}{\delta^2} \COMMA
\]
where the second inequality is due to first claim of Theorem~\ref{theorem.Prob:enough randomness}: By definition, we have
\[
  (Y_1, \ldots, Y_{n-1}, Z)^\T
  = Q^\T w_2
  = Q^\T \cdot [v_1, \ldots, v_n] \cdot \mu
  = [Q^\T v_1, \ldots, Q^\T v_n] \cdot \mu \DOT
\]
The third inequality stems from the fact that
$
  \delta(Q^\T v_1, \ldots, Q^\T v_n)
  = \delta(v_1, \ldots, v_n)
  \geq \delta
$,
where the equality is due to the orthogonality of~$Q$ (Claim~\ref{delta properties:orthogonal matrix} of Lemma~\ref{lemma:delta properties}).
\end{proof}

\begin{lemma}
\label{lemma:expectation bound}
Let~$Y$ be the number of slopes of $R = R_{w_1,w_2}$ that lie in the interval $(0, 1]$. Then, $\Ex{Y} \leq \frac{4mn^2}{\delta^2}$.
\end{lemma}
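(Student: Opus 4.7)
The plan is to partition $(0,1]$ into $N$ equal subintervals of length $\e = 1/N$, apply Lemma~\ref{lemma:probability bound} to each one, and let $N \to \infty$. The key observation is that if event $\F_\e$ does not occur, then by the definition of $\F_\e$ any two distinct slopes of $R$ differ by more than $\e$, so each subinterval $(k\e, (k+1)\e]$ contains at most one slope of $R$. Hence, on the complement of $\F_\e$,
\[
  Y \cdot \ID{\overline{\F_\e}}
  \leq \sum_{k=0}^{N-1} \ID{\A_{k\e, \e}} \COMMA
\]
where $\A_{t, \e}$ is the event from Lemma~\ref{lemma:event covering} that $R$ has a slope in $(t, t+\e]$. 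Taking expectations, using linearity, and invoking Lemma~\ref{lemma:probability bound} gives
\[
  \Ex{Y \cdot \ID{\overline{\F_\e}}}
  \leq \sum_{k=0}^{N-1} \Pr{\A_{k\e, \e}}
  \leq N \cdot \frac{4mn^2\e}{\delta^2}
  = \frac{4mn^2}{\delta^2} \COMMA
\]
independently of $N$.

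To handle the remaining contribution $\Ex{Y \cdot \ID{\F_\e}}$, I would use the crude deterministic bound $Y \leq \binom{m}{n} \leq m^n$, which follows from the fact that every vertex of $P$ is determined by a choice of $n$ tight constraints out of $m$, so $R$ contains at most $m^n$ edges and thus at most $m^n$ slopes. By Lemma~\ref{lemma:failure probability II}, $\Pr{\F_\e} \to 0$ as $\e \to 0$, hence $\Ex{Y \cdot \ID{\F_\e}} \leq m^n \cdot \Pr{\F_\e} \to 0$.

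Combining the two bounds yields $\Ex{Y} \leq \frac{4mn^2}{\delta^2} + m^n \cdot \Pr{\F_\e}$ for every $\e > 0$, and letting $\e \to 0$ finishes the proof.

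The main subtlety is the balance between the two contributions: the discretization error coming from subintervals that could contain several slopes must vanish, while the sum of per-subinterval probability bounds must stay bounded. This works precisely because Lemma~\ref{lemma:probability bound} is linear in $\e$ (so the telescoped sum is independent of $N$) and because Lemma~\ref{lemma:failure probability II} guarantees $\Pr{\F_\e} \to 0$, which absorbs the wasteful combinatorial factor $m^n$ in the limit.
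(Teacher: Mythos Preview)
Your proof is correct and follows essentially the same approach as the paper: partition $(0,1]$ into $N$ subintervals of width $\e=1/N$, use that on $\overline{\F_\e}$ each subinterval carries at most one slope so that $Y$ is bounded by $\sum_{k}\ID{\A_{k\e,\e}}$, apply Lemma~\ref{lemma:probability bound} to each summand, absorb the complementary contribution via the crude bound $Y\le m^n$ together with $\Pr{\F_\e}\to 0$ from Lemma~\ref{lemma:failure probability II}, and let $\e\to 0$. The only cosmetic difference is that the paper bounds the number of edges by $\binom{m}{n-1}\le m^n$ rather than via the vertex count $\binom{m}{n}\le m^n$.
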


\begin{proof}
For a real $\e > 0$ let~$\F_\e$ denote the event from Definition~\ref{definition:failure event}. Recall that all slopes of~$R$ differ by more than~$\e$ if~$\F_\e$ does not occur. Let~$Z_{t,\e}$ be the random variable that indicates whether~$R$ has a slope in the interval $(t, t+\e]$ or not, i.e., $Z_{t,\e} = 1$ if there is such a slope and $Z_{t,\e} = 0$ otherwise. Then, for any integer $k \geq 1$
\[
  Y \leq \begin{cases}
    \sum_{i=0}^{k-1} Z_{\frac{i}{k}, \frac{1}{k}} & \text{if $\F_{\frac{1}{k}}$ does not occur} \COMMA \cr
    m^n & \text{otherwise} \DOT
  \end{cases}
\]
This is true since $\binom{m}{n-1} \leq m^n$ is a worst-case bound on the number of edges of~$P$ and, hence, of the number of slopes of~$R$. Consequently,
\begin{align*}
  \Ex{Y}
  &\leq \sum_{i=0}^{k-1} \Ex{Z_{\frac{i}{k}, \frac{1}{k}}} + \Pr{\F_{\frac{1}{k}}} \cdot m^n
  = \sum_{i=0}^{k-1} \Pr{\A_{\frac{i}{k}, \frac{1}{k}}} + \Pr{\F_{\frac{1}{k}}} \cdot m^n \cr
  &\leq \sum_{i=0}^{k-1} \frac{4mn^2 \cdot \frac{1}{k}}{\delta^2} + \Pr{\F_{\frac{1}{k}}} \cdot m^n
  = \frac{4mn^2}{\delta^2} + \Pr{\F_{\frac{1}{k}}} \cdot m^n \COMMA
\end{align*}
where the second inequality stems from Lemma~\ref{lemma:probability bound}. The claim follows since the bound on $\Ex{Y}$ holds for any integer $k \geq 1$ and since $\Pr{\F_\e} \to 0$ for $\e \to 0$ in accordance with Lemma~\ref{lemma:failure probability II}.
\end{proof}

\begin{proof}[Proof of Theorem~\ref{maintheorem}]
Lemma~\ref{lemma:expectation bound} bounds only the expected number of edges on the path~$R$ that
have a slope in the interval~$(0,1]$. However, the lemma can also be used to bound the expected number
of edges whose slope is larger than~$1$. For this, one only needs to exchange the order
of the objective functions~$w_1^\T x$ and~$w_2^\T x$ in the projection~$\pi$. Then any edge with a
slope of~$s>0$ becomes an edge with slope~$\frac{1}{s}$. Due to the symmetry in the choice of~$w_1$ and~$w_2$,
Lemma~\ref{lemma:expectation bound} can also be applied to bound the expected
number of edges whose slope lies in~$(0,1]$ for this modified projection, which are exactly the
edges whose original slope lies in~$[1,\infty)$. 

Formally we can argue as follows.
Consider the vertices $x'_1 = x_2$ and $x'_2 = x_1$, the directions $w'_1 = -w_2$ and $w'_2 = -w_1$, and the projection $\pi' = \pi_{w'_1,w'_2}$, yielding a path~$R'$ from $\pi'(x'_1)$ to $\pi'(x'_2)$. Let~$X$ be the number of slopes of~$R$ and let~$Y$ and~$Y'$ be the number of slopes of~$R$ and of~$R'$, respectively, that lie in the interval $(0, 1]$. The paths~$R$ and~$R'$ are identical except for the linear transformation
$
  \begin{bmatrix} 
    x \cr
    y
  \end{bmatrix}
  \mapsto
  \begin{bmatrix}
    0 & -1 \cr
    -1 & 0
  \end{bmatrix}
  \cdot
  \begin{bmatrix}
    x \cr
    y
  \end{bmatrix}
$.
Consequently, $s$ is a slope of~$R$ if and only if $\frac{1}{s}$ is a slope of~$R'$ and, hence, $X \leq Y + Y'$. One might expect equality here but in the unlikely case that~$R$ contains an edge with slope equal to~$1$ we have $X = Y + Y' - 1$. The expectation of~$Y$ is given by Lemma~\ref{lemma:expectation bound}. Since this result holds for any two vertices~$x_1$ and~$x_2$ it also holds for~$x'_1$ and~$x'_2$. Note, that~$w'_1$ and~$w'_2$ have exactly the same distribution as the directions the shadow vertex algorithm computes for~$x'_1$ and~$x'_2$. Therefore, Lemma~\ref{lemma:expectation bound} can also be applied to bound~$\Ex{Y'}$ and we obtain $\Ex{X} \leq \Ex{Y} + \Ex{Y'} = \frac{8mn^2}{\delta^2}$.
\end{proof}

The proof of Corollary~\ref{maincorollaryI} follows immediately from Theorem~\ref{maintheorem} and Claim~\ref{delta properties:comparison} of Lemma~\ref{lemma:delta properties}.

\section{Some Probability Theory}
\label{sec:protheory}

\newcommand{\Src}{\cite{BrunschR12}\xspace}
\newcommand{\ClaimI}{Theorem~35\xspace}
\newcommand{\ClaimII}{Lemma~36\xspace}
\newcommand{\ClaimIII}{Corollary~37\xspace}

The following theorem is a variant of \ClaimI from~\Src. The two differences are as follows: In~\Src arbitrary densities are considered. We only consider uniform distributions. On the other hand, instead of considering matrices with entries from $\SET{ -1, 0, 1 }$ we consider real-valued square matrices. This is why the results from~\Src cannot be applied directly.

\begin{theorem}
\label{theorem.Prob:enough randomness}
Let~$X_1, \ldots, X_n$ be independent random variables uniformly distributed on $(0, 1]$, let $A = [a_1, \ldots, a_n] \in \RR^{n \times n}$ be an invertible matrix, let $(Y_1, \ldots, Y_{n-1}, Z)^\T = A \cdot (X_1, \ldots, X_n)^\T$ be the linear combinations of~$X_1, \ldots, X_n$ given by~$A$, and let $I \colon \RR^{n-1} \to \SET{ [x, x+\e] \WHERE x \in \RR }$ be a function mapping a tuple $(y_1, \ldots, y_{n-1})$ to an interval $I(y_1, \ldots, y_{n-1})$ of length~$\e$. Then the probability that~$Z$ lies in the interval $I(Y_1, \ldots, Y_{n-1})$ can be bounded by
\[
  \Pr{Z \in I(Y_1, \ldots, Y_{n-1})} \leq \frac{2n\e}{\delta(a_1, \ldots, a_n) \cdot \min_{k \in [n]} \|a_k\|} \DOT
\]
\end{theorem}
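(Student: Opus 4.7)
\emph{Proof plan.} The natural approach is a change of variables from $X$ to $(Y,Z) = AX$ combined with Fubini. Since $A$ is invertible, $(Y,Z)$ has density $1/|\det A|$ on the parallelotope $A \cdot (0,1]^n$, so integrating first over~$z$ (where the indicator $\mathbf{1}[z \in I(y)]$ restricts the range to an interval of length~$\e$) and then over~$y$ yields
\[
  \Pr{Z \in I(Y_1, \ldots, Y_{n-1})}
  \leq \frac{\e}{|\det A|} \cdot \mathrm{Vol}_{n-1}(\Pi) \COMMA
\]
where $\Pi \subset \RR^{n-1}$ is the projection of the parallelotope $A \cdot (0,1]^n$ onto its first $n-1$ coordinates, i.e., the set of $y$ for which the fiber $\SET{z \WHERE A^{-1}(y,z)^\T \in (0,1]^n}$ is non-empty.

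The next step is an explicit computation of $\mathrm{Vol}_{n-1}(\Pi)$. The set $\Pi$ is (up to a translation) the zonotope generated by the vectors $v_k \DEF ((a_k)_1, \ldots, (a_k)_{n-1})^\T \in \RR^{n-1}$, so the standard zonotope volume formula gives
\[
  \mathrm{Vol}_{n-1}(\Pi) = \sum_{k=1}^n \left| \det[v_1, \ldots, \hat{v}_k, \ldots, v_n] \right| \DOT
\]
Each term is the absolute value of the $(n,k)$-minor of~$A$, and Cramer's rule rewrites it as $|(A^{-1})_{k,n}| \cdot |\det A|$. Writing $c \DEF A^{-1} e_n$ for the last column of $A^{-1}$, this simplifies to $\mathrm{Vol}_{n-1}(\Pi) = |\det A| \cdot \|c\|_1$ and hence $\Pr{Z \in I(Y_1, \ldots, Y_{n-1})} \leq \e \cdot \|c\|_1$.

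It then remains to bound $\|c\|_1$ in terms of~$\delta = \delta(a_1, \ldots, a_n)$ and $\min_k \|a_k\|$, which is essentially a rephrasing of Claim~\ref{delta properties:inverse} of Lemma~\ref{lemma:delta properties}. Using $[\N(a_1), \ldots, \N(a_n)]^{-1} = D A^{-1}$ with $D = \diag(\|a_1\|, \ldots, \|a_n\|)$, the $k$-th column $m_k$ of $[\N(a_1), \ldots, \N(a_n)]^{-\T}$ equals $\|a_k\|$ times the transpose of the $k$-th row of~$A^{-1}$, so the individual entry $c_k = (A^{-1})_{k,n}$ satisfies $|c_k| \leq \|m_k\|/\|a_k\| \leq 1/(\delta \cdot \|a_k\|)$. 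Consequently $\|c\|_1 \leq n \|c\|_\infty \leq n/(\delta \cdot \min_k \|a_k\|)$, which proves the stated bound (in fact with the sharper constant~$n$ rather than~$2n$).

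The main technical hurdle I expect is justifying the zonotope volume identity, i.e., that $\mathrm{Vol}_{n-1}(\Pi)$ decomposes as the sum of absolute values of $(n{-}1) \times (n{-}1)$ minors of~$A$. Once that is in place, the bookkeeping via Cramer's rule to entries of $A^{-1}$ and then via Claim~\ref{delta properties:inverse} to~$\delta$ is mechanical. An alternative route that sidesteps the zonotope formula is to decompose $\RR^n$ as $\langle c \rangle \oplus c^\perp$ and apply Fubini: along each line parallel to~$c$ the bad set has length at most $\e \|c\|$, and the Cauchy-type identity $\mathrm{Vol}_{n-1}(\Pi_{c^\perp}([0,1]^n)) = \|c\|_1/\|c\|$ yields the same bound $\e \|c\|_1$.
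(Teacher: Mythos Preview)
Your argument is correct and follows the same skeleton as the paper's: both reduce to $\Pr{Z \in I(Y)} \le C\e \cdot \|A^{-1}e_n\|_1$, identify $\|A^{-1}e_n\|_1 = \sum_k |\det(A_{n,k})|/|\det(A)|$ via Cramer's rule, and then control this norm through Claim~\ref{delta properties:inverse} of Lemma~\ref{lemma:delta properties}. The difference is in execution. The paper imports Theorem~35 and Lemma~36 of~\cite{BrunschR12} as black boxes to bound $\int_{\RR^{n-1}}\max_z f_X(A^{-1}(y,z)^\T)\,\d y$ by $2\sum_k|\det(A_{n,k})|$, inheriting a spurious factor~$2$; you do the change of variables and Fubini directly and read off $\mathrm{Vol}_{n-1}(\Pi)=\sum_k|\det(A_{n,k})|$ from Shephard's zonotope volume formula, which is both self-contained and sharper. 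For the final norm estimate the paper uses $\|\cdot\|_1\le\sqrt{n}\,\|\cdot\|_2$ together with the row-norm inequality in Claim~\ref{delta properties:inverse}, whereas you use $\|\cdot\|_1\le n\,\|\cdot\|_\infty$ together with the column-norm equality; both yield the same factor~$n$. Net effect: your route avoids the external reference and proves the theorem with constant~$n$ instead of~$2n$.
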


\begin{proof}
Let us consider the proof of \ClaimI of~\Src for $m = n$ and $k = 1$. We obtain
\[
  \Pr{Z \in I(Y_1, \ldots, Y_{n-1})} \leq \e \cdot |\det(A^{-1})| \cdot \int_{y \in \RR^{n-1}} \max_{z \in \RR} f_X(A^{-1} \cdot (y, z)^\T) \d y \COMMA
\]
where~$f_X$ denotes the common density of the variables $X_1, \ldots, X_n$. In our case, $f_X$ is~$1$ on $(0, 1]^n$ and~$0$ otherwise. Note, that in the proof of \ClaimI matrix~$A$ was an integer matrix and so $|\det(A^{-1})| \leq 1$. In this proof considering this factor is crucial.

It remains to bound $\int_{y \in \RR^{n-1}} \max_{z \in \RR} f_X(A^{-1} \cdot (y, z)^\T) \d y$. For this we only have to consider the proof of \ClaimII of~\Src since all densities are rectangular functions. Here, we have $\chi = 1$ and $\ell_i = 1$ and $\phi_i = 1$ for any $i \in [n]$. The only point where the structure of matrix~$A$ is exploited is where $|\det(PAT)|$ for $P = [\ID{n-1}, \ZERO{n-1}{1}]$ and $T = [e_1, \ldots, e_{i-1}, e_{i+1}, \ldots, e_n]$ for an arbitrary index $i \in [n]$ is bounded. Since $PAT = A_{n, i}$, we obtain
\begin{align*}
  \int_{y \in \RR^{n-1}} \max_{z \in \RR} f_X(A^{-1} \cdot (y, z)^\T) \d y
  &\leq \chi \cdot \sum_{i \in [n]} \sum_{j=0}^1 |\det(A_{n, i})| \cdot \prod_{i' \neq i} \ell_{i'} \cr
  &= 2 \cdot \sum_{i \in [n]} |\det(A_{n, i})| \DOT
\end{align*}
Summarizing both bounds, we obtain
\[
  \Pr{Z \in I(Y_1, \ldots, Y_{n-1})}
  \leq 2\e \cdot |\det(A^{-1})| \cdot \sum_{i \in [n]} |\det(A_{n, i})|
  = 2\e \cdot \sum_{i \in [n]} \frac{|\det(A_{n, i})|}{|\det(A)|} \DOT
\]
We are now going to bound the fraction $\frac{|\det(A_{n, i})|}{|\det(A)|}$. To do this, consider the equation $Ax = e_n$. We obtain
\[
  |x_i|
  = \frac{|\det([a_1, \ldots, a_{i-1}, e_n, a_{i+1}, \ldots, a_n])|}{|\det(A)|}
  = \frac{|\det(A_{n,i})|}{|\det(A)|} \COMMA
\]
where the first equality is due to Cramer's rule and the second equality is due to Laplace's formula. Hence,
\[
  \Pr{Z \in I(Y_1, \ldots, Y_{n-1})}
  \leq 2\e \cdot \sum_{i \in [n]} \frac{|\det(A_{n, i})|}{|\det(A)|}
  = 2\e \cdot \|x\|_1
  \leq 2\sqrt{n}\e \cdot \|x\|_2
\]
Now consider the equation $\hat{A} \hat{x} = e_n$ for $\hat{A} = [\N(a_1), \ldots, \N(a_n)]$. Vector $\hat{x} = \hat{A}^{-1} e_n$ is the $n^\text{th}$ column of the matrix $\hat{A}^{-1}$. Thus, we obtain
\[
  \|\hat{x}\|
  \leq \max_{\substack{\text{$r$ column}\\\text{of $\hat{A}^{-1}$}}} \|r\|
  \leq \frac{\sqrt{n}}{\delta(a_1, \ldots, a_n)} \COMMA
\]
where second inequality is due to Claim~\ref{delta properties:inverse} of Lemma~\ref{lemma:delta properties}. Due to $A = \hat{A} \cdot \diag(\|a_1\|, \ldots, \|a_n\|)$, we have
\[
  x
  = A^{-1} e_n
  = \diag \left( \frac{1}{\|a_1\|}, \ldots, \frac{1}{\|a_n\|} \right) \cdot \hat{A}^{-1} e_n
  = \diag \left( \frac{1}{\|a_1\|}, \ldots, \frac{1}{\|a_n\|} \right) \cdot \hat{x} \DOT
\]
Consequently, $\|x\| \leq \frac{\|\hat{x}\|}{\min_{k \in [n]} \|a_k\|}$ and, thus,
\begin{align*}
  \Pr{Z \in I(Y_1, \ldots, Y_{n-1})}
  &\leq 2\sqrt{n}\e \cdot \frac{\|\hat{x}\|}{\min_{k \in [n]} \|a_k\|} \cr
  &\leq \frac{2n\e}{\delta(a_1, \ldots, a_n) \cdot \min_{k \in [n]} \|a_k\|} \DOT \qedhere
\end{align*}
\end{proof}


\end{document}